\newcommand{\NP}{\ensuremath{\mathcal{NP}}}
\def\O(#1){\ensuremath{\mathcal{O}(#1)}}
\title{Path-Additions of Graphs \thanks{Supported by the Deutsche
    Forschungsgemeinschaft (DFG), grant Br835/18-2.}}
\titlerunning{Path-Addition}
\author{Franz J. Brandenburg  \and Alexander Esch \and Daniel Neuwirth}
   \institute{University of Passau, 94030 Passau, Germany \\
   \email{$\{$brandenb, eschalex, neuwirth$\}$@fim.uni-passau.de}}
\authorrunning{F. J. Brandenburg et al.}
\begin{document}

\maketitle

\begin{abstract}
  Path-addition is an operation that takes a graph and adds an internally
  vertex-disjoint path between two vertices together with a set of
  supplementary edges. Path-additions are just the opposite of taking
  minors.

  We show that some classes of graphs are closed under path-addition,
  including non-planar, right angle crossing, fan-crossing free,
  quasi-planar, (aligned) bar 1-visibility, and interval graphs, whereas
  others are not closed, including all subclasses of planar graphs, bounded
  treewidth, $k$-planar, fan-planar, outer-fan planar, outer-fan-crossing
  free, and bar $(1,j)$-visibility graphs.
\end{abstract}

\section{Introduction}

The characterization of   planar graphs by forbidden minors is one
of the highlights of graph theory. It is commonly known as
Kuratowski's theorem \cite{K-cgdp-30} and states that a graph $G$ is
planar if and only if  there is no subgraph that can be obtained
from $K_5$ or $K_{3,3}$ by subdividing edges. In other words, $K_5$
and $K_{3,3}$  are the forbidden topological minors of the planar
graphs \cite{d-gt-00}. Equivalently, there is Wagner's Theorem
\cite{w-minor-37} which states that
 $G$ is planar if and only if  $K_5$ or $K_{3,3}$ cannot be obtained
 from $G$ by edge contractions, edge deletions and the removal of isolated vertices,
 i.e., $K_5$ and $K_{3,3}$ are the minors of $G$.
 An edge
contraction merges the endvertices of an edge  and a subdivision
splits an  edge into a path of length two, i.e.,\ it places a new
vertex on an edge. For more details we refer the reader to Diestel's
textbook \cite{d-gt-00}.

The above characterizations can be turned into a linear-time
planarity testing algorithm which returns a witness: if the tested
graph $G$ is planar, then there is a planar embedding
\cite{mm-ephtpta-96} from which one can obtain a straight-line
planar grid drawing \cite{fpp-hdpgg-90} and if $G$ is non-planar,
then there are distinguished  vertices and vertex-disjoint paths for
the representation of a subdivision of $K_5$ and $K_{3,3}$,
respectively, \cite{w-dfsks-84, mn-LEDA-99}. Thus, the  (extended)
algorithm comes with a  correctness proof.

The theorems of Kuratowski and Wagner are a prominent result in the
theory of graph minors culminating in the  Robertson-Seymour or
Graph Minor Theorem, which states that every  class of graphs
$\mathcal{G}$ closed under taking minors has a finite set of graphs,
called the obstruction set or the set of forbidden minors. A graph
$H$ belongs to $\mathcal{G}$  if and only if $H$ does not contain a
graph in the obstruction set of $\mathcal{G}$ as a minor
\cite{b-pagbt-98, d-gt-00, rs-minorsXX-04}.

The path-addition operation is an extended inverse of edge
contraction and combines subdivision and edge insertion.
Subdivisions    replace  an edge by an internally vertex-disjoint
path. A path-addition starts from scratch and introduces a new path.
Since some classes of graphs are not hereditary, such as chordal
graphs \cite{bls-gc-99, d-gt-00} or graphs with a strong visibility
representation \cite{dett-gdavg-99}, we augment a path-addition by a
set of supplementary edges to meet the requirements of a given class
of graphs. A path-addition can be emulated by an edge insertion,
followed by a subdivision and further edge insertions. Moreover, we
require that an inserted path is long, since too short paths would
violate the defining properties of some classes of graphs. For the
 the supplementary
edges we add the restriction that they are incident to at least one
internal vertex of the path.

Clearly, if a class of graphs $\mathcal{G}$ is closed under taking
minors and under path-addition, then $\mathcal{G}$  is
\emph{trivial}, i.e.,\ $\mathcal{G} = \emptyset$, or $\mathcal{G}$
consists of the empty graph, or $\mathcal{G}$ is the set of all
graphs. In particular, if $\mathcal{G}$ contains a non-empty graph,
then the closure of $\mathcal{G}$ under path-addition includes any
complete graph $K_k$ as a minor. To see this, first add paths until
a graph with at  least $k$ vertices is obtained, and then add vertex
disjoint paths to obtain a subdivision of $K_k$ as a subgraph.

Path-additions of graphs were introduced by Brandenburg et al.\
\cite{ben-ab1v-16}. They are helpful to distinguish classes of
beyond-planar graphs. Beyond-planar is a collective term for classes
of graphs that are defined by restrictions on crossings in visual
representations. Particular examples are 1-planar \cite{ringel-65}
and $k$-planar graphs \cite{pt-gdfce-97}, fan-planar
\cite{bddmpst-fan-15, bcghk-rfpg-14} and fan-crossing free graphs
\cite{cpkk-fan-15}, quasi-planar graphs \cite{aapps-qpg-97}, right
angle crossing (RAC) graphs \cite{del-dgrac-11}, bar
 \cite{DEGLST-bkvg-07}, bar $(1,j)$
\cite{bhkn-bvg-15}, and $1$-visibility graphs \cite{b-vr1pg-14},
rectangle visibility graphs \cite{hsv-rstg-99}, as well as map
graphs \cite{cgp-mg-02, t-mgpt-98}. Some of these classes are
specialized by an alignment of the vertices, such as outerplanar,
outer 1-planar \cite{abbghnr-o1p-15, heklss-ltao1p-15},
outer-fan-planar \cite{bcghk-rfpg-14}, outer-fan-crossing free, and
aligned (or semi) bar 1-visibility \cite{ben-ab1v-16, fm-pbkvg-08}
graphs.

We  summarize our results on the closure of certain classes of
graphs under path-addition in Table \ref{table1}. The closure
properties under subdivision and edge contraction are  known or easy
to obtain. The results show that there is no implication between
path-addition, subdivision, and edge contraction.

\begin{table}
\centering
\begin{tabular}{ l | c |c | c }
    & path- & sub- & edge \\
   graph & addition & division & contraction \\
  \hline
  planar  & -- & + & + \\
  $k$-planar & -- & + & -- \\
  right angle crossing (RAC) & + & + & -- \\
  fan-planar & -- & + & -- \\
  fan-crossing free & + & + & -- \\
  quasi-planar & + & + & -- \\
  bar 1-visibility & + & + & -- \\
  bar $(1,j)$-visibility & -- & + & --\\
  outerplanar & -- & -- & -- \\
  outer 1-planar & -- & -- & -- \\
  outer fan-planar & -- & -- & -- \\
  outer fan-crossing free & -- & -- & -- \\
  aligned bar 1-visibility (AB1V) & + & -- & -- \\
  \hline
\end{tabular}
\vspace{.2cm}
 \caption{Closure properties of classes of graphs}
  \label{table1}
\end{table}

In this work, we study the closure of classes of graphs under
path-addition, where the graphs are defined by a visual
representation. In Section \ref{Sect:prelim} we introduce basic
concepts. A positive closure is studied in Section
\ref{sect:positive} and a negative closure in Section
\ref{sect:negative}.
  We conclude with some open problems.

\section{Preliminaries} \label{Sect:prelim}

We consider simple, undirected graphs $G = (V,E)$ that are defined
by a visual representation. For general graph theoretic terms we
refer to \cite{dett-gdavg-99, d-gt-00}. A class of graphs is the set
of all graph satisfying a particular property. A class of graphs
$\mathcal{G}$ is \emph{hereditary}  if $G \in \mathcal{G}$ implies
that  every induced subgraph of $G$ is in $\mathcal{G}$
\cite{s-egr-03}.

An \emph{embedding} or \emph{drawing}  $\mathcal{E}(G)$ is a mapping
of a graph $G$ into the plane where each vertex is mapped to a point
and each edge $e=(u,v)$ to a Jordan curve connecting the points of
$u$ and $v$. There is a \emph{straight-line drawing} if all curves
of edges are straight lines. Two edges \emph{cross} if their curves
intersect. (Curves of) Edges are not allowed to pass through the
points of other vertices, and edges incident to a vertex do not
cross.

A \emph{generalized visibility representation}  represents each
vertex of a graph by a  horizontal segment, called a \emph{bar},
which is a rectangle of small height and positive width. Each edge
is represented by a  vertical  line of sight with
$k$-\emph{visibility}, such that the line of sight may traverse the
bars of at most $k$ other vertices. Then an edge \emph{crosses} a
vertex. Sometimes, horizontal and vertical are exchanged. Planar
visibility \cite{dett-gdavg-99} with non-transparent bars is
obtained if $k=0$. In an \emph{interval} representation  $k$ is
unbounded such that $k$ overlapping bars or intervals induce $K_k$
as a subgraph.

There are several versions of visibility including \emph{strong},
$\epsilon$, and \emph{weak} visibility. In the strong and
$\epsilon$-versions there is an edge if and only if there is a
visibility. An $\epsilon$-version requires a line of sight of
  width $\epsilon > 0$. This makes a subtle difference, since, in
the planar case,  $K_{2,3}$ is an $\epsilon$-visibility graph but is
not a strong visibility graph. Furthermore, the recognition problem
is \NP-hard for strong visibility graphs and solvable in linear time
for $\epsilon$-visibility graphs \cite{dett-gdavg-99}. Planar
visibility graphs were characterized by Wismath \cite{w-cbg-85} and
Tamassia and Tollis \cite{tt-vrpg-86}. In the weak version there is
a visibility if there is an edge. Hence, edges can be omitted. A
class of weak visibility graphs is hereditary, and the weak
visibility graphs are exactly the induced  subgraphs of the strong
or $\epsilon$-visibility graphs. Also \emph{interval graphs} assume
the strong version of visibility \cite{s-egr-03} and are not
hereditary.

Planarization is a useful tool for embeddings and generalized
visibility representations. It represents the vertices and edges of
the given graph and the edge-edge and edge-vertex crossings. The
\emph{planarization} $\mathcal{P}(\mathcal{E}(G))$ of an embedding
$\mathcal{E}(G)$  is obtained by placing a dummy vertex of degree
four at each crossing point of two edges and thereby subdividing
each crossed edge. Thereafter, $\mathcal{P}(\mathcal{E}(G))$ is an
ordinary planar embedding and there are vertices, edges, and faces,
whose boundary consists of edges and edge segments and is determined
by the vertices and crossing points on the boundary. At each vertex
$v$ (including the dummy vertices) there is a \emph{rotation system}
describing the cycling ordering of the edges incident to $v$ or of
the neighbors of $v$. An embedding is \emph{triangulated} if each
face of the planarization is a triangle.

Similarly, the planarization  $\mathcal{P}(\mathcal{E}(G))$ of a
generalized visibility representation introduces a dummy vertex at
each point on the boundary of a bar or a rectangle, where  a line of
sight either contacts or crosses the bar. In graph drawing these
points are often called ports. Thereafter, the boundary of a bar is
partitioned into segments in counterclockwise order, such that the
boundary of a bar consists of horizontal straight lines and of
orthogonal polylines. Each line of sight is a sequence of straight
vertical segments, which are either in free space or in the interior
of a bar. Each such segment is an edge of the planarization, which
in turn is an orthogonal drawing of a planar graph
\cite{dett-gdavg-99}.
 The planarization of a generalized visibility representation
introduces open and closed faces, where a closed face lies in the
interior of a bar and therefore is excluded for a placement of other
bars or points.

\subsection{Classes of Graphs}
Embeddings and generalized visibility representations are a rich
source for the definition of classes of graphs, and, in particular,
for beyond-planar graphs.
A graph $G$ is $k$-planar \cite{ringel-65,pt-gdfce-97} if it admits
a drawing such that each edge is crossed by at most $k$ other edges,
and is $k$-quasi-planar  \cite{aapps-qpg-97} if there are no $k$
pairwise crossing edges. In a fan-planar drawing
\cite{bddmpst-fan-15, bcghk-rfpg-14} an edge is allowed to cross two
or more edges  if the crossed edges are incident to a vertex and
form a fan, whereas a fan-crossing free drawing \cite{cpkk-fan-15}
excludes edges that cross  two edges  that are incident to the same
vertex. A right angle crossing graph (RAC) \cite{del-dgrac-11}
allows crossings of edges if all edges are represented by straight
lines and edges cross at a right angle. Finally, a map graph
\cite{cgp-mg-02, t-mgpt-98} is obtained from a planar dual. Here
each vertex is represented by a region  and there is an edge if and
only if two regions share at least one point. This results in a
complete subgraph $K_k$ if $k$ regions meet at a single point.

Accordingly, a graph $G$ is a bar $k$-visibility graph
\cite{DEGLST-bkvg-07} if it admits a visibility representation with
$k$-visibility. Then the line of sight of each edge may traverse or
cross at most $k$ other bars. If $k=1$ and each bar is passed by at
most $j$ edges, we obtain bar $(1,j)$-visibility graphs
\cite{bhkn-bvg-15} and $1$-visibility graphs \cite{b-vr1pg-14} if
additionally $j=1$. The two-dimensional extension with
non-transparent rectangles for vertices and horizontal and vertical
lines of sight for edges leads to  rectangle visibility graphs
\cite{hsv-rstg-99}.

Many of these concepts come with a parameter $k$, where $k=0$
corresponds to the planar case  and $k=1$ is most commonly used.
Moreover, there are diverse generalizations, for example from right
angle to large angle drawings \cite{del-dgrac-11}, visibility
representations in 3D \cite{befhlmrrsw-98},  and $L$-shape
visibility \cite{lm-Lvis-16}, see also \cite{l-beyond-14}.

On the other hand, a common specialization comes with an alignment
of the vertices such that they are placed on a line or are attached
to a line. Then all vertices appear in the outer face. So we obtain
outerplanar and outer 1-planar graphs \cite{abbghnr-o1p-15,
heklss-ltao1p-15}, outer-fan-planar \cite{bcghk-rfpg-14} and
outer-fan-crossing free graphs, as well as aligned bar 1-visibility
graphs \cite{ben-ab1v-16}. In the planar case, aligned bar
visibility graphs and  outerplanar graphs coincide
\cite{cdhmm-vrg-95}. Outer 1-planar graphs are planar
\cite{abbghnr-o1p-15} and are a proper subclass of outer fan-planar,
outer fan-crossing free and (weak) aligned bar 1-visibility graphs
\cite{ben-ab1v-16}. Here, each vertex is represented by a vertical
bar with bottom at the $x$-axis and there is an edge if there is a
horizontal line of sight that traverses or crosses at most one other
bar.

\subsection{Path-Addition} \label{sect:path-addition}

A path-addition adds a path of sufficient length between two
vertices and adds a set of supplementary edges which have at least
one end vertex on the path. These are technical restrictions which
help to preserve the properties of  a particular class. The set of
supplementary edges $F$ is not needed if the class of graphs is
hereditary, in which case we let $F = \emptyset$. Otherwise, $F$ is
computed from a   representation.

\begin{definition}
    For a graph  $G = (V,E)$, two vertices  $u, v \in V$, and an
    internally
    vertex-disjoint path
    $P = (u, w_1, \ldots, w_t, v)$ with $w_i \not\in V$ for $1 \leq i \leq
    t$ from $u$ to $v$,
    the  \emph{path-addition} results in a graph
    $G' = (V \cup W, E \cup Q \cup F)$ such that  $W = \{w_1, \ldots,
    w_t\}$ is the set of internal vertices of $P$,
    $Q$ consists of the edges of $P$ and $F$ is a set of supplementary edges with at least
    one endpoint in $W$. We denote $G'$ by $G \oplus P \oplus F$.
\end{definition}

\begin{definition}
    A class of graphs $\mathcal{G}$ is \emph{closed under path-addition} if
    for every graph $G$ in $\mathcal{G}$
    and for every internally  vertex-disjoint path $P$ of length at least
    $|G|-1$ between two
    vertices $u$ and $v$ of $G$ there is a set of edges $F$ such that $G
    \oplus P \oplus F$ is in  $\mathcal{G}$.
    If $\mathcal{G}$ is hereditary, then we let $F
    = \emptyset$.
\end{definition}

Our definition of path-additions comes with two parameters, the
length of the added path and a set of supplementary edges $F$.

We have chosen long paths to be independent of a particular
representation of the graph and to avoid a situation where a too
short path must violate requirements from the given class of graphs.
If paths $P_1, \ldots, P_r$ are added successively to a graph $G$,
then the length of the paths increases at least exponentially such
that $|P_i| \geq |G|-1 + |P_1| + \ldots + |P_{i-1}|$ and $|P_i| \geq
2^{i-1}({G} - 1)+1$.

The  set of supplementary edges $F$ is added to preserve a given
class. This seems necessary if the class of graphs is not closed
under taking subgraphs, e.g., for chordal or triangulated graphs and
for strong visibility representations. The edges are only
constrained by the fact that one endvertex is from the new path.

\section{Positive Closure Results}  \label{sect:positive}

Clearly, a graph $G$ remains non-planar if a path (and  further
edges) are added to $G$. Similarly, non-planarity is preserved by
subdivision, whereas a planar graph may result from a non-planar one
by an edge contraction or an edge   or a vertex removal.

\begin{corollary} \label{cor:non-planar}
The class of non-planar graphs is closed under path-addition.
\end{corollary}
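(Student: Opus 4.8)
The plan is to observe that path-addition is a strictly additive operation on the underlying graph: starting from $G = (V,E)$ and forming $G' = G \oplus P \oplus F = (V \cup W, E \cup Q \cup F)$, the original graph $G$ is a subgraph of $G'$ (indeed $G$ is the subgraph induced by $V$, since $W$ is disjoint from $V$ and every edge of $E$ is retained). Since the class of non-planar graphs is hereditary upward — a graph containing a non-planar subgraph is itself non-planar — this immediately gives the result. Because non-planarity is preserved upward, no supplementary edges are needed for the argument, so the claim even holds with $F = \emptyset$.

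Concretely, I would first recall that $G' \supseteq G$ as a subgraph, by inspecting the definition of $\oplus$. Then I would invoke Kuratowski's theorem (quoted in the introduction): since $G$ is non-planar, $G$ contains a subdivision of $K_5$ or $K_{3,3}$; that same subdivision is a subgraph of $G'$, so $G'$ contains a subdivision of $K_5$ or $K_{3,3}$ and is therefore non-planar. Equivalently, one can argue directly that any drawing of $G'$ restricts to a drawing of $G$, so a planar drawing of $G'$ would yield a planar drawing of $G$, contradiction. Either phrasing is a one-line argument.

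There is essentially no obstacle here — this is why the statement is labelled a corollary rather than a theorem. The only thing worth a sentence is confirming that the definition of path-addition never \emph{deletes} vertices or edges of $G$: $Q$ and $F$ are added, and the vertex set only grows. Once that is noted, the monotonicity of non-planarity under taking supergraphs closes the argument, and the same reasoning will be reused (as the excerpt's preamble notes) for subdivision, which likewise only adds structure.
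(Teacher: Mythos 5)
Your argument is correct and is exactly the one the paper intends: the text preceding the corollary simply observes that $G$ remains non-planar when a path and further edges are added, i.e., $G$ is a subgraph of $G \oplus P \oplus F$ and non-planarity is inherited by supergraphs. Your additional remarks (Kuratowski witness, $F=\emptyset$ sufficing) are consistent elaborations of the same one-line reasoning.
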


Many classes of graphs admit the routing of a vertex-disjoint path
along a simple or shortest path between two vertices. Details are
obtained from the visual representation. This technique is
applicable if a given edge or vertex can be crossed by a new edge
between two new vertices.

\begin{theorem}
The following classes of graphs are closed under path-addition:
\begin{itemize}
  \item right angle crossing graphs (RAC)
  \item fan-crossing free graphs
  \item quasi-planar graphs
  \item bar 1-visibility graphs
  \item aligned bar 1-visibility graphs (AB1V)
  \item interval graphs
\end{itemize}
\end{theorem}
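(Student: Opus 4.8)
\textit{Proof idea.}
The plan is to handle all six classes by routing the new path through the given representation, close to a path of $G$ joining $u$ and $v$, and to use the supplementary set $F$ only where the class is not hereditary (interval graphs, and the strong versions of visibility), where $F$ must absorb the adjacencies forced by the new objects; for the hereditary classes $F=\emptyset$. If $G$ is disconnected I would place its components far apart and draw $P$ through empty space, creating no new crossing at all, so assume $G$ connected and fix a representation of it. The workhorse is: pick a \emph{shortest} path $\pi=(u=x_0,x_1,\dots,x_k=v)$ in $G$ and route $P=(u,w_1,\dots,w_t,v)$ inside a thin corridor following $\pi$, so that the crossings of $P$ are governed by those of $\pi$. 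Since $\pi$ is shortest it has no chords, which keeps both the corridor and the edges meeting its interior under control; this is the point that makes the length requirement $|P|\ge|G|-1$ suffice.

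For the drawing-based classes --- RAC, fan-crossing free, quasi-planar --- I would, for each edge $x_ix_{i+1}$ of $\pi$, run one segment of $P$ parallel and $\varepsilon$-close to it; for small $\varepsilon$ this segment crosses exactly the edges crossed by $x_ix_{i+1}$, transversally and in the same way, so in particular right angles are preserved in the RAC case. At an internal vertex $x_i$ the path $P$ must bend from the track along $x_{i-1}x_i$ to the track along $x_ix_{i+1}$; I would do this inside a small disk around $x_i$, passing on the side carrying the fewer incident edges and crossing each such edge by its own short segment (perpendicular to it, in the RAC case), spending fresh vertices of $P$ there, while any surplus length of $P$ is absorbed by subdividing the straight runs. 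One then checks the class invariant: parallel copies keep right angles (RAC); a shortest-path edge crosses no fan, so neither does its parallel copy, and keeping $P$ monotone along $\pi$ prevents any edge from meeting two edges of $P$ that share a vertex (fan-crossing free); and a thin, monotone, self-crossing-free $P$ with tiny corner segments creates no three mutually crossing edges (quasi-planar). This produces a drawing of $G\oplus P$ in the class with $F=\emptyset$.

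For bar $1$-visibility and aligned bar $1$-visibility I would do the analogous thing in the (planarized) visibility representation: follow the lines of sight realizing $\pi$ and, near each consecutive pair of bars $b_i,b_{i+1}$ that see each other, insert a block of thin new bars squeezed next to existing bars, stretching the representation horizontally to make room, so that every new line of sight crosses at most one already-present bar; in the aligned case the new bars are kept short enough (or placed near an end of $b_i$) not to obstruct an existing line of sight, and any adjacency that a new bar forces with an original vertex is put into $F$ (it is incident to a new bar). Interval graphs are the easiest case: lay the new intervals $I_{w_1},\dots,I_{w_t}$ as a connected chain from inside the interval of $u$ to inside the interval of $v$; since no existing interval is moved, the new adjacencies are only among the $w_i$ and between the $w_i$ and original vertices, so collecting all of them into $F$ leaves a valid interval representation.

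The hard part will be the corner navigation for the drawing-based classes: routing $P$ around a vertex of $\pi$ of possibly large degree while respecting the \emph{exact} geometric constraint of the class --- every crossing a right angle (RAC), no fan crossed (fan-crossing free), no triple crossing (quasi-planar) --- and still fitting the entire detour within the vertices available on $P$; using a shortest $\pi$, which has no chords and hence few edges incident to its interior, is what keeps this within budget. For the visibility classes the analogous difficulty is threading a long chain of new bars through a rigid representation without ever creating a second bar on an already-existing line of sight.
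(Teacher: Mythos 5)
Your proposal follows essentially the same route as the paper: pick a shortest path $\pi$ from $u$ to $v$, route $P$ in a thin corridor parallel to $\pi$, detour around each internal vertex on the side of its rotation system carrying fewer edges (spending one vertex of $P$ per crossed edge, with right angles preserved by parallelism in the RAC case), set $F=\emptyset$ for the hereditary drawing-based classes, and use $F$ to absorb forced adjacencies for the strong visibility and interval cases. The one point where your sketch is weaker than the paper's argument is the budget for the detours: chordlessness of a shortest path does not bound the number of edges incident to its internal vertices, and the sum of the half-rotations you must cross can a priori exceed $|G|-1$. The paper closes this by a charging argument against the linear edge-density bounds of the classes ($4n-10$ edges for RAC, $4n-8$ for fan-crossing free), associating several edges of $G$ with each unit of detour so that the total detour length stays below $n-1$; for quasi-planar graphs it instead starts from a single edge along $\pi$ and subdivides only where a triple crossing would arise, charging each new subdivision vertex to a vertex of $G$ not met by $\pi$. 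Your bar $1$-visibility routing also differs mildly (the paper shoots vertical rays from $u$ and $v$ to the top of the representation and joins them by a new topmost bar, rather than following $\pi$), but that is an equally viable construction.
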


\begin{proof}
Consider a graph  $G$ with  vertices $u$ and $v$   and an internally
vertex-disjoint  path $P = (u_0, \ldots, u_r)$ from $u$ to $v$. Let
 $S = (v_0,\ldots, v_s)$ be a shortest path from $u$ to $v$ in $G$.
 Thus $u = u_0 = v_0$ and $v = u_0 = v_0$.

First, we consider RAC, fan-crossing free and quasi-planar graphs,
where a graph $G$ is given by an embedding $\mathcal{E}(G)$. Since
these classes are hereditary, the set of supplementary edges is
empty. Path $P$ must be added to the embedding, and we route $P$
along $S$ in $\mathcal{E}(G)$.  Consider the $i$-th vertex $v_i$ of
$S$. If $(e_1, \ldots, e_t)$ is the rotation system at vertex $v_i$
given by $\mathcal{E}(G)$, then $S$ enters $v_i$ via edge $e_h$ and
leaves $v_i$ via  $e_j$. Suppose that $h-j \leq j-h$ in the circular
ordering of the rotation system. The other case is similar. Then
route a section of length $j-h-1$ of $P$ around   $v_i$. We
associate this section of $P$ with $v_i$. If vertices $u_1, \ldots
u_q$ have been associated with the vertices $v_0,\ldots, v_{i-1}$,
then associate $u_q+1, \ldots, u_q + h-j$ to $v_i$ and place these
vertices close to $v_i$ such that the edge $(u_{q + \nu}, u_{q+
\nu+1})$ crosses $e_{h+\nu}$ for $\nu=1, \ldots, h-j-1$, see Fig.\
\ref{fig:atvertex}. Note that the first edge $(u_{q +1}, u_{q+2})$
crosses edge $(v_{i-1}, v_i)$ of $S$ if $j-h < h-j$.  If edge $(v_i,
v_{i+1})$ of $S$ crosses   edge $f$ in $\mathcal{E}(G)$, then edge
$(u_{\mu}, u_{\mu+1})$ of $P$ crosses $f$, where $u_{\mu}$ is the
last  vertex of $S$ associated with $v_i$  and $u_{\mu+1}$ is the
first   vertex associated with $v_{i+1}$. If $\mathcal{E}(G)$ is a
RAC drawing, then the  vertices associated with each $v_i$  are
placed such that the edges $(u_{q +\nu}, u_{q+\nu+1})$ and
$e_{h+\nu}$ for $\nu=1, \ldots, h-j-1 $ cross at a right angle. Edge
$(u_{\mu}, u_{\mu+1})$ is parallel to the edge $(v_i, v_{i+1})$ and
if $(v_i, v_{i+1})$ crosses some edge $f$ at a right angle, then so
does $(u_{\mu}, u_{\mu+1})$. Clearly, the edges of $P$ cannot
introduce a fan-crossing.

If $P$ is longer than the sum of the associated segments, then
insert the remaining subpath just before $v$.

It remains to show that a length of $n-1$ suffices for the routing
of $P$ in $\mathcal{E}(G)$.
A RAC graph has at most $4n-10$ edges \cite{del-dgrac-11}. Then $G$
is maximal and no further edge can be added. So assume that $G$ is
maximal. If a segment of length $\lambda$ of $P$ is associated with
vertex $v_i$, then at least $4 \lambda$ edges of $G$ can be
associated with the segment. Each edge $(u_{q +\nu}, u_{q+\nu+1})$
crosses an  edge $e_{h+\nu} = (v_i, w_\nu)$  and by the maximality
of $G$ there is an edge $(w_{\nu}, w_{\nu+1})$ which is not crossed
by any edge of $P$ since $S$ is a shortest path. Hence, we can
associate the edges incident to $v_i$ together with the edges
connecting consecutive endvertices according to the rotation system
with the segment of $P$ associated with $v_i$, and this is a $4:1$
relation. In consequence, a path $P$ of length at most $(4n-10)/4$
can be routed between vertices $u$ and $v$ in the RAC   embedding
$\mathcal{E}(G)$. The same argument applies to fan-crossing free
graphs with at most $4n-8$ edges \cite{cpkk-fan-15}.

For quasi-planar graphs, we start with a single edge $e$ for $P$ and
follow $S$ from $u$ to $v$. If there is a violation of
quasi-planarity and $e$ crosses two other crossing edges $f$ and
$g$, then we subdivide $e$ with a vertex $u$ between $e \times f$
and $e \times g$, where $e \times f$  is the crossing point  of $e$
and $f$. We account $u$ to the other endvertex of $f$, which is not
passed by $S$ and $P$ if $S$ is a shortest path. Hence, $P$ does not
introduce three mutually crossing edges and a length of $n-1$
suffices for $P$.

Next, we consider generalized visibility representations.
%
%In a rectangle visibility representation, we route $P$ as a
%rectilinear path among rectilinear obstacles or to one of the outer
%sides \cite{lyw-rparo-96}, whichever is shorter.
%
In a bar 1-visibility representation with horizontal bars shoot a
vertical ray from vertex $u$ to the top. Suppose the ray traverses
bars $b_1, \ldots, b_r$. Then introduce a small bar $b'_i$ of width
$\epsilon$ just above $b_i$ such that $b'_i$ is not traversed by any
other line of sight, whereas $(b'_{i-1}, b'_i)$ traverses $b_i$,
where $b_0$ is the bar of $u$. Proceed similarly for a path from
$v$. Introduce a  new topmost bar between the rays. The length of
the new path from $u$ to $v$ is at most $n-1$ if the sum of the
distances of $u$ and $v$ to the top is at most the distance to the
bottom. Otherwise, we go to the bottom.

In case of a strong bar 1-visibility, one must add further edges to
meet the requirements. These edges can be retrieved from the bar
$1$-visibility representation.

In an aligned bar 1-visibility representation, the bars of inner
vertices of the path $P$ are shorter than the bars of the given
vertices and they are inserted in alternation with bars of vertices
of $G$, as elaborated in   \cite{ben-ab1v-16}.

Finally, in an interval representation, join the intervals of $u$
and $v$ by a sequence of overlapping intervals, and add
supplementary edges to preserve interval graphs.
 \qed
\end{proof}

\begin{figure}
    \centering
    \includegraphics[scale =0.7]{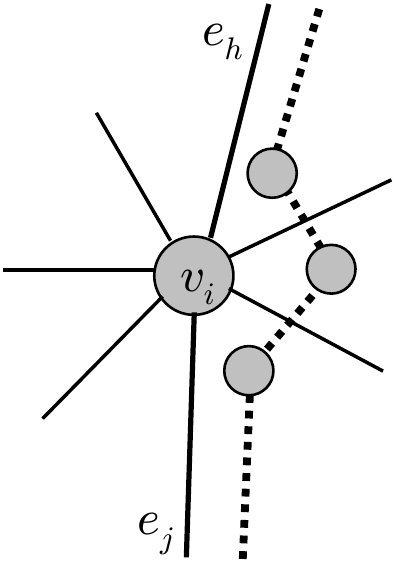}
    \caption{Routing a new path with dotted edges at a vertex in parallel with a given path
     whose edges are drawn bold}
    \label{fig:atvertex}
\end{figure}

There are many other classes of graphs \cite{bls-gc-99, s-egr-03},
and it seems likely that some are closed under path-addition, e.g.,
rectangle visibility graphs.

\section{Negative Closure Results}  \label{sect:negative}

As aforesaid, the closure under path-addition admits the
construction of every complete graph $K_k$ as a minor. In
consequence, path-addition opposes taking minors and can be regarded
as an \emph{anti-minor}.

\begin{corollary} \label{cor:minors}
A minor closed class of graphs $\mathcal{G}$ is closed under
path-addition if and only if $\mathcal{G}$ is trivial, i.e.,
$\mathcal{G}$ consists of the empty set, $\mathcal{G}$ contains only
the empty graph, or $\mathcal{G}$ is the set of all graphs.
\end{corollary}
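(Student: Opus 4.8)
The plan is to prove the two directions separately, the reverse implication being immediate and the forward one being where the work lies; the argument is essentially the one sketched in the introduction, now made precise. For the reverse direction ($\mathcal{G}$ trivial $\Rightarrow$ $\mathcal{G}$ closed under path-addition) I would just check the three cases: if $\mathcal{G}=\emptyset$ or $\mathcal{G}$ is the class of all graphs the closure condition holds vacuously respectively trivially, and if $\mathcal{G}$ consists only of the empty (vertexless) graph then that graph has no two vertices $u,v$, so the defining condition of ``closed under path-addition'' is satisfied vacuously.

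For the forward direction I would distinguish cases on $\mathcal{G}$. If $\mathcal{G}=\emptyset$ or $\mathcal{G}=\{\,\text{empty graph}\,\}$ we are already in a trivial case, so assume $\mathcal{G}$ contains a graph $G_0$ with at least one vertex; a single path-addition (allowing the two endpoints of the added path to coincide, which turns $K_1$ into $K_3$ since the shortest admissible closed path has length $3\ge |G_0|-1$) lets us assume $G_0$ has at least two vertices. The target is to show $K_k\in\mathcal{G}$ for every $k$: once this holds, minor-closedness forces $\mathcal{G}$ to be the class of all graphs, because every $n$-vertex graph is a subgraph, hence a minor, of $K_n$, so $\mathcal{G}$ again falls into a trivial case. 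In all cases $\mathcal{G}$ is therefore trivial.

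To produce $K_k$, I would first enlarge $G_0$: repeatedly apply path-addition, each time choosing the added path longer than the current number of vertices (this is admissible since long paths are always allowed), so that each step adds at least one new vertex; after finitely many steps we obtain $H\in\mathcal{G}$ with at least $k$ vertices. Fix $k$ of its vertices as branch vertices $x_1,\dots,x_k$, and then perform $\binom{k}{2}$ more path-additions, one internally vertex-disjoint path (of the required length) between each pair $\{x_i,x_j\}$. Since the internal vertices of every added path are new, these $\binom{k}{2}$ paths are internally vertex-disjoint from one another and from the rest of the graph, so the resulting graph $H^{*}\in\mathcal{G}$ contains a subdivision of $K_k$ as a subgraph; a subdivision of $K_k$ has $K_k$ as a (topological) minor, and minor-closedness then gives $K_k\in\mathcal{G}$.

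The main difficulty here is not conceptual but careful bookkeeping. One must verify that every intermediate path-addition is admissible (the required path length keeps growing, but it is only a lower bound, so there is always room both to add new vertices and to route the $\binom{k}{2}$ connecting paths), that the supplementary edge sets $F$ generated along the way are harmless (they can only add edges, and the chosen subdivision of $K_k$ survives as a subgraph regardless of what else is present), and that the degenerate small cases are handled so that exactly the three stated classes come out as the trivial ones --- in particular, the convention that an added path may have $u=v$ is what prevents $\{\,\text{empty graph},\,K_1\,\}$ from being a spurious exception.
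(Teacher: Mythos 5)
Your proof follows essentially the same route as the paper, which justifies this corollary by the argument sketched in its introduction: grow the graph by path-additions until it has at least $k$ vertices, then add internally vertex-disjoint paths between all pairs of $k$ branch vertices to obtain a subdivision of $K_k$, and conclude by minor-closedness that $\mathcal{G}$ contains every graph. Your extra care with the degenerate case is well spotted and goes beyond the paper: under the literal reading of the definition (requiring $u\neq v$), the class $\{\text{empty graph},K_1\}$ is minor-closed and vacuously closed under path-addition yet is not among the three listed trivial classes, so either your convention allowing $u=v$ or an explicit additional trivial case is needed for the statement to be exactly correct.
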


Forthcoming, we shall exclude trivial classes of graphs.

As an immediate consequence of the construction of $K_k$ minors for
arbitrary $k>0 $  we obtain that classes of graphs with bounded
treewidth and all subclasses of the planar graphs are destroyed by
path-addition.

\begin{corollary} \label{cor:treewidth}
A class of graphs  $\mathcal{G}$  is not closed under path-addition
if $\mathcal{G}$ has bounded treewidth.
\end{corollary}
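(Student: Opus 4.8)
The plan is to derive Corollary \ref{cor:treewidth} directly from the $K_k$-minor construction discussed before Corollary \ref{cor:minors} rather than re-proving anything from scratch. Recall that treewidth is minor-monotone: if $H$ is a minor of $G$ then $\mathrm{tw}(H) \le \mathrm{tw}(G)$. In particular, the complete graph $K_k$ has treewidth $k-1$, so any graph containing $K_k$ as a minor has treewidth at least $k-1$. So the whole argument amounts to observing that path-addition can manufacture arbitrarily large clique minors inside the closure of any nontrivial class.

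First I would dispose of the trivial cases: if $\mathcal{G}$ is empty or contains only the empty graph, then it has bounded treewidth but it is also minor-closed in a degenerate way, and one should simply note that the statement is about classes with bounded treewidth that are assumed (as the paper says "Forthcoming, we shall exclude trivial classes") to be nontrivial; so we may fix a nonempty graph $G_0 \in \mathcal{G}$. Next I would invoke verbatim the construction already sketched in the introduction: starting from $G_0$, repeatedly add paths (with empty or suitable supplementary edge sets) until the vertex count exceeds $k$, and then add pairwise internally vertex-disjoint paths between the $\binom{k}{2}$ pairs of a chosen $k$-element vertex subset, yielding a subdivision of $K_k$ as a subgraph of the resulting graph $G'$. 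Since the required path lengths are only lower-bounded (length at least $|G|-1$), and long paths are always allowed, there is no obstruction to carrying this out; moreover a subdivision of $K_k$ is in particular a minor, hence $K_k$ is a minor of $G'$.

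Then the key step: if $\mathcal{G}$ were closed under path-addition, every $G'$ obtained by iterated path-additions from $G_0$ would still lie in $\mathcal{G}$, so $\mathcal{G}$ would contain a graph with a $K_k$ minor for every $k$, forcing $\mathrm{tw}(\mathcal{G})$ to be unbounded, contradicting the hypothesis. Hence $\mathcal{G}$ is not closed under path-addition. I would phrase this as a short contrapositive paragraph so that the logical direction matches the corollary statement.

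The main obstacle, such as it is, is purely expository rather than mathematical: one must be slightly careful that "closed under path-addition" only guarantees membership of $G \oplus P \oplus F$ for \emph{some} supplementary edge set $F$, and that adding such extra edges $F$ can only \emph{increase} the treewidth and certainly cannot destroy an already-present $K_k$ minor (adding edges preserves minors), so the adversarial choice of $F$ does not help. I would state this one-line monotonicity remark explicitly and otherwise keep the proof to a few sentences, since everything substantive has already been established in the paragraph preceding Corollary \ref{cor:minors}.
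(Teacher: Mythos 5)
Your proposal is correct and follows essentially the same route as the paper, which states this corollary as an immediate consequence of the $K_k$-minor construction sketched in the introduction. Your additional remarks (minor-monotonicity of treewidth, exclusion of trivial classes, and that the supplementary edges $F$ cannot destroy a $K_k$ minor) merely make explicit what the paper leaves implicit.
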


\begin{corollary}
If $\mathcal{G}$ is a subclass of  planar graphs, then $\mathcal{G}$
is not closed under path-addition. In particular, the outerplanar,
outer 1-planar, series-parallel, and  planar graphs are not closed
under path-addition.
\end{corollary}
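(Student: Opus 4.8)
The plan is to derive this directly from the $K_k$-minor construction sketched in the introduction, specialised to $k=5$ so that Kuratowski's theorem produces the contradiction. By the standing convention, $\mathcal{G}$ is a \emph{non-trivial} class and hence contains a non-empty graph $G_0$. If the smallest such graph has only one vertex, one first adds a triangle to it (a closed path of length $3 \ge |G_0|-1$), landing in $\mathcal{G}$ with three vertices; in any case one then keeps adding paths, one at a time, until a graph $G \in \mathcal{G}$ with $|V(G)| \ge 5$ is reached. The closure hypothesis makes every such step legal, and $G$, being in $\mathcal{G}$, is planar.

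Next I would fix five vertices $v_1,\dots,v_5$ of $G$ and add, successively, the ten paths $P_{ij}$ --- one for each unordered pair $\{i,j\}\subseteq\{1,\dots,5\}$ --- where $P_{ij}$ joins $v_i$ and $v_j$. Each addition is admissible: given the current graph $H$, choose $P_{ij}$ of length at least $|H|-1$, and let the closure property supply a set $F_{ij}$ of supplementary edges with $H \oplus P_{ij} \oplus F_{ij} \in \mathcal{G}$. Because the internal vertices of every $P_{ij}$ are fresh --- disjoint from the current vertex set --- the ten paths are automatically internally vertex-disjoint from one another and from $v_1,\dots,v_5$. Write $G'$ for the resulting graph; by construction $G' \in \mathcal{G}$.

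The subgraph of $G'$ consisting of $v_1,\dots,v_5$ together with the ten paths $P_{ij}$ is precisely a subdivision of $K_5$, and the supplementary edges $F_{ij}$ do no harm, since adding edges cannot destroy a subgraph that is already present. Hence, by Kuratowski's theorem~\cite{K-cgdp-30}, $G'$ is non-planar, contradicting $G'\in\mathcal{G}$ together with $\mathcal{G}$ being a class of planar graphs. Therefore no non-trivial subclass of the planar graphs is closed under path-addition. The classes outerplanar, outer $1$-planar (which are planar by the cited result), series-parallel, and planar are all non-trivial subclasses of the planar graphs, so the ``in particular'' statement follows; for every one of them except the class of all planar graphs, Corollary~\ref{cor:treewidth} is an alternative route.

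I expect no substantial obstacle. Path-addition imposes no constraint that could block any of the ten steps --- only the length condition, which is trivially satisfied by picking long enough paths --- and the single point worth stating explicitly is that, since a $K_5$-subdivision is read off merely \emph{as a subgraph} and only at the very end, the particular choice of the sets $F_{ij}$ is immaterial. The remaining details (enlarging $G$ to five vertices, verifying internal disjointness) are routine.
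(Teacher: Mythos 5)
Your proof is correct and follows essentially the same route as the paper, which derives this corollary directly from the $K_k$-minor construction sketched in the introduction (grow the graph, then add pairwise internally disjoint paths between $k=5$ vertices to force a $K_5$-subdivision, contradicting planarity via Kuratowski). The only quibble is your ``triangle'' step for a one-vertex $G_0$, since the definition of path-addition requires \emph{two} vertices $u,v$ of $G$; this edge case is irrelevant for the named classes and does not affect the argument.
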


For further non-closure results we use the following technique:
Suppose there is a cycle $C$ of length $r$ separating a graph $G$
into an inner and an outer component such that the components are
nonempty and $C$ can be traversed only  $c \cdot k$ times, where $c$
is the length of $C$. Then we can add $c \cdot k +1$ paths to
violate this property.

There are some   obstacles for the application of this technique.
First, a graph in many classes of beyond-planar graphs does not
necessarily have a unique embedding or visibility representation. In
consequence, two vertices $u$ and $v$ may be separated by a cycle
$C$ in one embedding whereas $C$ does not separate them in another
embedding. Second, the removal of the cycle does not necessarily
partition a graph into components, since   connectivity may be
preserved by crossing edges. Therefore, we consider a planarization.
The removal of a vertex $v$ from the planarization induces the
removal of the vertex and all edges of the given graph that are
incident to $v$. In particular, if $v$ is a crossing point of two
edges then both edges are removed, and similarly for an edge-vertex
crossing in visibility representations where the edge and the
crossed vertex are removed.

\begin{definition}
A graph  $G$  has the  \emph{separating cycle property}, SCP for
short, if for every planarization $\mathcal{P}(\mathcal{E}(G))$ of
an embedding or a generalized visibility representation
 there is a cycle $C = (v_0, e_1, v_1, \ldots, e_r, v_r)$  of consecutive vertices
 and edges such that  $G-C$ decomposes into $s > 1$ components $C_1, \ldots, C_s$ and there are at least two
components $C_i$ and $C_j$ with $i \neq j$ containing  a vertex of
$G$.
\end{definition}

If each edge or vertex of $C$ can be traversed at most $k$ times,
then there is an upper bound on the number of path-additions between
two vertices in distinct components.

\begin{lemma}
If a class of graphs $\mathcal{G}$ contains a graph $G$ with SCP and
every edge or vertex of $G$ can be crossed at most $k$ times, then
$\mathcal{G}$ is not closed under path-addition.
\end{lemma}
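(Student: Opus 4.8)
The plan is to combine the separating-cycle property with a counting argument on how many times the cycle $C$ can be traversed. Fix a graph $G \in \mathcal{G}$ with SCP such that every edge or vertex of $G$ is crossed at most $k$ times. By the SCP, pick a vertex $a$ in one component $C_i$ of $G-C$ and a vertex $b$ in another component $C_j$, $i \neq j$; since the cycle $C$ has some length $r = |C|$ (using at most $|G|$ vertices), set $m := r\cdot k + 1$ and let $P_1, \ldots, P_m$ be $m$ paths, each of length at least $|G|-1$ (and, as the excerpt notes, of exponentially growing length when added successively), each with endpoints $a$ and $b$, and each internally vertex-disjoint. Apply path-addition $m$ times, obtaining $G' = G \oplus P_1 \oplus F_1 \oplus \cdots \oplus P_m \oplus F_m$. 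The claim is that $G' \notin \mathcal{G}$, which shows $\mathcal{G}$ is not closed under path-addition.

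The heart of the argument: suppose for contradiction $G' \in \mathcal{G}$, and consider any embedding or generalized visibility representation of $G'$, hence its planarization $\mathcal{P}(\mathcal{E}(G'))$. Restricting attention to the sub-representation induced by $G$ (together with the crossings inherited from $G'$), the SCP applied to $G$ guarantees a cycle $C$ separating $G$ into components, at least two of which, say the ones containing $a$ and $b$, are nonempty. Now each added path $P_\ell$ connects $a$ to $b$, so in the planarization $P_\ell$ must cross the closed curve $C$ — more precisely, tracing $P_\ell$ from the side of $C$ containing $a$ to the side containing $b$, some edge of $P_\ell$ must cross an edge of $C$, or pass through (cross) a vertex of $C$ in the visibility setting. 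The first time each $P_\ell$ crosses from one side to the other thus consumes one ``crossing slot'' on some edge or vertex of $C$. Since the $m$ paths are internally vertex-disjoint and their edges are new, these are $m$ distinct crossings distributed over the at most $r$ edges and $r$ vertices of $C$; by pigeonhole some edge or vertex of $C$ receives at least $\lceil m/(2r) \rceil$ crossings — and here one should be slightly more careful with constants so that the bound genuinely exceeds $k$. (Cleanly: each $P_\ell$ crosses $C$ a net odd number of times, in particular at least once, so at least one edge-or-vertex of $C$ is crossed by at least $m/r > k$ of the $P_\ell$'s when $m = rk+1$, after noting that a single traversal through $C$ touches one element of $C$; the supplementary edges $F_\ell$ only add further crossings and cannot help.) This contradicts the hypothesis that every edge or vertex of $G$ — in particular every element of $C$ — is crossed at most $k$ times in any representation of a graph in $\mathcal{G}$.

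The main obstacle I anticipate is making the ``crossing count on $C$'' rigorous in the two settings simultaneously. In the embedding case, $C$ is a closed Jordan curve, so any arc from $a$ to $b$ crosses it an odd number of times and in particular crosses the boundary; the subtlety is that a path could weave back and forth, but we only need \emph{one} crossing per path, and a path could also route through a crossing vertex of the planarization that is not on $C$ — however such detours do not remove the need to cross $C$ itself, since $C$ genuinely separates the plane. In the visibility case one must argue that a line of sight realizing an edge of $P_\ell$ that goes from $a$'s side to $b$'s side must traverse (cross, in the $k$-visibility sense) a bar belonging to $C$, which is exactly the ``edge crosses a vertex'' notion counted by $k$; here the role of the \emph{closed faces} inside bars, noted in the preliminaries, matters, since bars of $P_\ell$ cannot be placed there. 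A secondary, more bookkeeping-level obstacle is pinning down the constant: I would state $m = |C|\cdot k + 1$ as in the technique description preceding the lemma, and observe that since $|C| \le |G|$ this is a fixed finite number of path-additions, so the paths can indeed be chosen of the required (successively exponential) lengths; the precise factor of $2$ coming from counting both edges and vertices of $C$ is harmless and can be absorbed by taking $m = 2|C|\cdot k + 1$ if one wants to be scrupulous.
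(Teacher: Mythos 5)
Your proposal is correct and follows essentially the same route as the paper: pick $u$ in one component and $v$ in another, add $|C|\cdot k+1$ (or, to be safe with the edge/vertex count, $2|C|\cdot k+1$) internally disjoint paths, and observe that each must traverse the separating cycle $C$ in any representation, exceeding the crossing budget by pigeonhole. The paper's proof is a two-sentence version of exactly this argument; your additional care about the Jordan-curve/visibility case split, the role of the supplementary edges $F_\ell$, and the precise constant only fills in details the paper leaves implicit.
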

\begin{proof}
Consider two vertices $u$ and $v$ in different components of $G-C$,
and add at least $c \cdot k +1$ paths between $u$ and $v$, where $c$
is the length of $C$. Each path must traverse $C$ which allows at
most $c \cdot k$ traversals.
\qed
\end{proof}

In consequence, we obtain:

\begin{corollary}
The 1-planar graphs and the $(1,j)$ bar 1-visibility graphs for $j
\leq 4$ satisfy SCP and are not  closed under path-addition.
\end{corollary}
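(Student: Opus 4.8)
The plan is to invoke the Lemma immediately preceding this corollary, so the task reduces to exhibiting, for each of the two claimed graph families, a concrete member that (i) has the separating cycle property (SCP) and (ii) has the property that every edge and every vertex can be crossed at most $k$ times, for a suitable finite $k$. For $1$-planar graphs, $k=1$ by definition, so I only need to produce a $1$-planar graph with SCP. The natural candidate is a small "nested" $1$-planar graph: take a cycle $C$ of length $r$ (say a triangle or a $4$-cycle) and attach at least one vertex strictly inside it and at least one vertex strictly outside, with enough edges that $C$ is forced to separate these two vertices in \emph{every} $1$-planar drawing. I would argue the "every drawing" part by a connectivity/rotation argument: if the inner and outer attachments are joined to $C$ densely enough (e.g. each of the two extra vertices is adjacent to all vertices of $C$, giving a graph containing $K_5$ minus an edge or a suitable book-like structure that is still $1$-planar for small $r$), then in any drawing $C$ is a closed Jordan curve whose two sides must each receive one of the two vertices, so $G-C$ has the inner vertex and the outer vertex in distinct components. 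One must double-check that the chosen graph is actually $1$-planar — this is where I would pick the parameters conservatively (e.g. $r=3$ or $r=4$ with a single inner and a single outer vertex) so $1$-planarity is easy to verify by an explicit drawing.

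For the $(1,j)$ bar $1$-visibility graphs with $j\le 4$, the bound on crossings per vertex (i.e. per bar) is exactly $j\le 4$, and the bound on crossings per edge is $1$ (it is a \emph{bar $1$-visibility} representation), so again the hypothesis "every edge or vertex can be crossed at most $k$ times" of the Lemma holds with $k=\max(1,j)\le 4$. The remaining work is to find a bar $(1,j)$-visibility graph with SCP. Here I would take a graph whose every planarization of every valid representation contains a short separating cycle; a convenient choice is again a small triangulated-sphere-like gadget — for instance a graph built from a $4$-cycle $C$ together with one vertex forced "inside" and one "outside" via enough edges — and then check it admits a bar $(1,j)$-visibility representation for $j\le 4$ (for small concrete graphs this is a routine explicit construction of bars on the real line with the allowed number of line-of-sight crossings per bar). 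Once such a $G$ is exhibited, the Lemma applies verbatim: adding $c\cdot k + 1$ paths between the inner and outer vertex, where $c=|C|$, forces more than $c\cdot k$ traversals of $C$, contradicting closure under path-addition.

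The main obstacle is the SCP clause "for \emph{every} planarization of \emph{every} embedding or generalized visibility representation" — beyond-planar graphs generally have many inequivalent drawings, so one cannot simply point at one bad drawing. The proof must therefore choose the witness graph $G$ so rigidly that the separating $4$-cycle (or triangle) is topologically unavoidable: any drawing of $G$ must realize $C$ as a simple closed curve with a distinguished vertex strictly on each side. I expect this rigidity to come from $3$-connectivity-style arguments together with the fact that in a $1$-planar (resp. bar $1$-visibility) drawing the subdrawing induced by $C$ plus its two dense attachments is essentially forced. I would phrase the argument as: the subgraph on $C$ and the two special vertices is $3$-connected planar (or has a unique planar embedding up to reflection), hence $C$ bounds a face in that embedding on exactly one side of which each special vertex lies; crossings introduced by the rest of $G$ cannot merge the two sides because each edge of $C$ can be crossed at most once and each vertex at most $k\le 4$ times, so the separation survives in the planarization. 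After that, the application of the Lemma is immediate, and the corollary follows. $\qed$
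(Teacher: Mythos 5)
Your reduction to the preceding Lemma is the right move, and your reading of the crossing bounds ($k=1$ for $1$-planar graphs, at most $j\le 4$ traversals per bar for bar $(1,j)$-visibility) matches what the paper needs. The gap is exactly at the point you yourself flag as ``the main obstacle'' and then do not close: establishing SCP, i.e., that a separating cycle with two real vertices on opposite sides survives in \emph{every} $1$-planar drawing (resp.\ every bar $(1,j)$-visibility representation) of the witness graph, not just in one convenient drawing. The small gadgets you propose --- a triangle or a $4$-cycle with one vertex joined to all cycle vertices ``inside'' and another ``outside'' --- are planar graphs (essentially $K_5$ minus an edge, or the octahedron), and nothing in your sketch rules out a $1$-planar redrawing in which the two apexes end up on the same side of every candidate cycle of the planarization. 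The appeal to ``$3$-connectivity-style arguments'' does not do the work: $3$-connectivity forces uniqueness of \emph{planar} embeddings, but it does not control $1$-planar drawings, and pinning down all $1$-planar embeddings of even a small dense graph is precisely the hard combinatorial step.

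The paper avoids proving any such rigidity from scratch. For $1$-planarity it uses the extended wheel graphs $XW_{2k}$, for which known results (Suzuki) give an essentially unique $1$-planar embedding in which the two poles $p$ and $q$ are separated by the $2k$-cycle whose edges can each be crossed only once --- indeed the paper notes the stronger fact that no further internally vertex-disjoint $p$--$q$ path can be added at all. For bar $(1,j)$-visibility with $j\le 4$ it takes graphs from the literature that are known to have a \emph{fixed} bar $1$-visibility representation, so the ``for every representation'' quantifier in SCP is discharged by citation. Without either citing or re-proving such an embedding-uniqueness (or representation-uniqueness) result for your chosen gadgets, your argument establishes SCP only for one drawing and therefore does not yield the corollary.
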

 \begin{proof}
 The extended wheel graphs $XW_{2k}$  are 1-planar graphs
with two poles  $p$ and $q$ and a cycle of $2k$ vertices such that
$p$ is inside and $q$ is outside the cycle, or vice-versa
\cite{s-o1p-84, s-rm1pg-10}. Since each edge of the cycle can be
traversed at most once, there are at most $2k$ path-additions for
$p$ and $q$. In fact, an extended wheel graph does not admit any
further path between its poles.

 Similarly, there are
bar $(1,j)$-visibility graphs for $1 \leq j \leq 4$ with a fixed bar
1-visibility representation \cite{bhkn-bvg-15} which satisfy SCP and
each bar can be traversed at most $j$ times. \qed
\end{proof}

In  passing, we note that 4-map graphs are not closed under
path-addition, since they are the triangulated 1-planar graphs
\cite{b-4m1pg-15, cgp-rh4mg-06}.

\begin{lemma} \label{lem:K2qcouterex}
For every $k \geq 0$ there is a  complete bipartite graph  $K_{2,q}$
 which satisfies SCP for  $k$-planar and bar
$(1,k)$-visibility graphs.
\end{lemma}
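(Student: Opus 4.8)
The plan rests on the fact that the only cycles of $K_{2,q}$ are the $4$-cycles $C_{ij}=(u,w_i,v,w_j)$ through the two degree-$q$ vertices $u,v$ and two of the remaining vertices $w_1,\dots,w_q$. So to prove that $K_{2,q}$ has SCP it suffices to show that an arbitrary $k$-planar embedding (resp.\ bar $(1,k)$-visibility representation) of $K_{2,q}$ admits a $4$-cycle $C_{ab}$ that, after planarization, has a vertex of $G$ strictly in its interior and one strictly in its exterior: deleting $C_{ab}$ from the plane planarization then separates these two vertices, since no edge of a planarization crosses $C_{ab}$. I would work with the $q$ internally vertex-disjoint $2$-paths $\pi_i=(u,w_i,v)$, which pairwise meet only in $u$ and $v$, and call $\pi_i,\pi_j$ \emph{clean} for each other if their drawn curves $\gamma_i,\gamma_j$ share no point outside $\{u,v\}$ --- in an embedding this means that no edge of $\pi_i$ crosses an edge of $\pi_j$ (the only crossings possible in $K_{2,q}$ are between a $u$-edge of one $2$-path and a $v$-edge of another), and in a visibility representation that neither $2$-path traverses a bar of the other, nor the bar of $u$ or of $v$. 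If $\pi_a,\pi_b,\pi_c,\pi_d$ are pairwise clean, then $\gamma_a,\gamma_b,\gamma_c,\gamma_d$ are four internally disjoint $u$--$v$ arcs in the planarization, hence a planar subdivision of the four-edge ``theta'' graph; relabelling so that they leave $u$ in rotation order $\gamma_a,\gamma_b,\gamma_c,\gamma_d$, the simple closed curve $\gamma_a\cup\gamma_c$ (closed up along short boundary arcs of the hub bars in the visibility case) has $w_b$ in its interior and $w_d$ in its exterior. Thus $C_{ac}$ witnesses SCP, and as it has graph-length $4$ each of its edges is crossed at most $k$ times, which is the side condition required to later invoke the preceding non-closure lemma.

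It remains to guarantee four pairwise clean $2$-paths once $q$ is large. Form the conflict graph $H$ on $\{1,\dots,q\}$, with an edge $\{i,j\}$ whenever $\pi_i,\pi_j$ are not clean for each other. In a $k$-planar embedding the $2q$ edges of $K_{2,q}$ account for at most $2qk$ crossing ends, hence at most $qk$ crossings, so $|E(H)|\le qk$. In a bar $(1,k)$-visibility representation each $2$-path traverses at most two bars while each bar is traversed at most $k$ times, so $H$ has maximum degree $\mathcal{O}(k)$ and again $\mathcal{O}(qk)$ edges, after discarding the at most $2k$ paths that traverse the bar of $u$ or of $v$. By the Caro--Wei bound $\alpha(H)\ge q^{2}/(q+2|E(H)|)=\Omega(q/k)$, which exceeds $3$ as soon as $q$ is of order $k$ (for instance $q=8k+4$ works in the embedding case), and an independent set of size $4$ in $H$ is exactly a clean quadruple; this completes both cases.

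The step I expect to be delicate is the theta/separation argument in the visibility setting. There a bar is a two-dimensional region rather than a point, the arcs $\gamma_a,\dots,\gamma_d$ attach to distinct ports on the boundary cycles of the bars of $u$ and $v$ and may additionally run through the interiors of bars $b_{w_x}$ with $x\notin\{a,b,c,d\}$, so one must check --- using the explicit structure of the planarization of a visibility representation (ports, the boundary cycle of a bar, the closed faces inside bars) --- that $C_{ac}$ really is a cycle of the planarization and that its interior and exterior separate $w_b$ from $w_d$ exactly as in the embedding case. The $k$-planar case is the clean version of this reasoning and should be routine once the bookkeeping is in place.
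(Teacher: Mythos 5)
Your proposal is correct in substance but takes a genuinely different route from the paper's. The paper argues by contradiction: it first extracts a single crossing-free quadrangle $(u_1,v_1,u_2,v_2)$ (with a rather terse justification), claims that a failure of SCP forces all remaining vertices $v_3,\ldots,v_q$ into one face of that quadrangle, and then pigeonholes crossings onto the four quadrangle edges to contradict $k$-planarity once $q\geq 4k+9$. You instead argue directly: each crossing in any drawing of $K_{2,q}$ necessarily pairs a $u$-edge with a $v$-edge of a different $2$-path, so there are at most $qk$ crossings; the conflict graph on the $q$ two-paths therefore has at most $qk$ edges, and Caro--Wei yields four pairwise non-crossing $2$-paths for $q$ of order $k$; two of them form the separating quadrangle and the other two supply the vertices in distinct faces, via the theta-graph/Jordan-curve argument. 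Your route is more self-contained and quantitative --- in particular it sidesteps the paper's ``initial planar quadrangle'' step and its claim that every $v_i$ in the outer face forces a crossing on the quadrangle, which as literally stated needs more justification than the paper gives --- at the cost of a slightly larger constant ($q=8k+4$ versus $4k+9$) and of the explicit rotation-order bookkeeping. In the bar $(1,k)$-visibility case both proofs are sketchy to a comparable degree; your plan of first discarding the at most $2k$ paths that traverse the hub bars and then contracting those bars to points to reduce to the embedding case is the right way to discharge the delicacy you flag, and the dummy vertices that paths outside the clean quadruple may place on the chosen cycle are harmless, since SCP explicitly admits cycles of the planarization.
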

\begin{proof}
For $k$-planarity, let $q \geq 4k+9$  and,  towards a contradiction,
assume that  $K_{2,q}$ does not satisfy SCP. Let $\{u_1, u_2\}$ and
$\{v_1,\ldots, v_q\}$ be the sets of vertices of $K_{2,q}$. Suppose
there is an initial planar quadrangle $(u_1, v_1, u_2, v_2)$,
otherwise, the edges $(u_1, v_i)$ and $(u_2, v_1)$ cross for $i=
2,\ldots, q$ in the rotation system at $u_1$, which violates
$k$-planarity. To violate SCP, all vertices $v_3, \ldots, v_q$ must
lie in one face of the quadrangle $(u_1, v_1, u_2, v_2)$ and there
is no other quadrangle as a separating cycle. Suppose all of $v_3,
\ldots, v_q$ lie in the outer face. Then each pair of edges incident
to $v_i$ crosses one edge of the quadrangle for $i=3, \ldots, q$,
which violates $k$-planarity for $q \geq 4k+9$.

The same argument applies to bar $(1,k)$-visibility representations,
where the edges incident to $v_i$ for $i=3, \ldots, q$ traverse one
of the bars of the vertices from the initial quadrangle $(u_1, v_1,
u_2, v_2)$.
\qed
\end{proof}

\begin{corollary} \label{cor:barvisnotpath}
The $k$-planar graphs and the bar $(1,k)$ visibility graphs are not
closed under path-addition for every $k \geq 0$.
\end{corollary}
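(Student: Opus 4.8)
The plan is to obtain the corollary directly from Lemma~\ref{lem:K2qcouterex} together with the lemma above that derives non-closure from SCP plus a $k$-bound on crossings. Fix $k \ge 0$ and let $\mathcal{G}$ be the class of $k$-planar graphs or the class of bar $(1,k)$-visibility graphs, and put $q = 4k+9$, $G = K_{2,q}$. First I would record that $G$ lies in $\mathcal{G}$: $K_{2,q}$ is planar, hence $k$-planar for every $k \ge 0$, and a planar visibility representation of it — the two vertices of the small side drawn as two long parallel bars and the $q$ vertices of the large side as short bars placed side by side between them — is in particular a bar $(1,k)$-visibility representation for every $k \ge 0$. By Lemma~\ref{lem:K2qcouterex}, this $G$ has the separating cycle property with respect to $\mathcal{G}$.

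Second I would check the remaining hypothesis of the SCP lemma: in every representation realizing $\mathcal{G}$, each edge or each vertex of $G$ is crossed at most $k$ times. For $k$-planar drawings this is the definition — each edge, hence each edge of a separating cycle, is crossed by at most $k$ other edges. For bar $(1,k)$-visibility representations the relevant object crossed is the bar of a vertex of $C$, and by definition each bar is passed by at most $k$ lines of sight. With both hypotheses in hand the SCP lemma applies and gives immediately that $\mathcal{G}$ is not closed under path-addition, which is the assertion of the corollary.

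The one step I expect to require care — and the reason the definition of SCP quantifies over \emph{every} planarization of \emph{every} representation — is that the separating cycle $C$ supplied for $K_{2,q}$, and hence its length $c$, is not fixed in advance but depends on the representation of the graph obtained after the path-additions; in particular one cannot name beforehand the pair of vertices that $C$ will separate. I would handle this by adding, in the reduction, $k(q+2)+1$ internally vertex-disjoint paths between \emph{every} pair of large-side vertices of $K_{2,q}$ — a fixed finite number, chosen large enough that it exceeds $c\,k$ for every cycle $C$ in $K_{2,q}$, since $c \le |K_{2,q}| = q+2$. Assuming $\mathcal{G}$ closed and adding these paths one after another, any representation of the resulting graph restricts to a representation of $K_{2,q}$ in $\mathcal{G}$, so Lemma~\ref{lem:K2qcouterex} furnishes a separating cycle $C$ whose two sides each contain a vertex $v_a$, respectively $v_b$, of $K_{2,q}$; the $k(q+2)+1 > c\,k$ paths between $v_a$ and $v_b$ then force more than $c\,k$ crossings with $C$, whereas the $c$ edges (resp.\ bars) of $C$ jointly admit at most $c\,k$ crossings. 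This contradiction is the whole of the argument, and everything past the bookkeeping of this uniform bound is immediate from the two lemmas.
\qed
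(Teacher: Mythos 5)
Your proof is correct and follows essentially the same route as the paper, which states this corollary without proof as the immediate combination of Lemma~\ref{lem:K2qcouterex} (the $K_{2,q}$ graph has SCP) with the preceding lemma (SCP plus a crossing bound of $k$ implies non-closure). Your extra bookkeeping --- adding $k(q+2)+1$ paths between every pair of large-side vertices so that the argument is independent of which pair the representation-dependent separating quadrangle happens to split --- is a sensible tightening of the SCP lemma's proof, but it does not change the approach.
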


It is not immediately clear that fan-planar graphs
\cite{ku-dfang-14, bddmpst-fan-15, bcghk-rfpg-14} satisfy SCP, since
there is no upper bound on the number of crossings per edge.
However, Proposition 1 of Kaufmann and Ueckerdt \cite{ku-dfang-14}
comes close. It states that a connected planar graph can be expended
to a maximal fan-planar graph preserving the planar edges. As an
example consider the crossed dodecahedral graph from
\cite{ku-dfang-14}. The dodecahedral graph is a planar 3-regular
graph of the dodecahedron with 20 vertices, 30 edges, and 12
pentagonal faces. In the crossed version there is a $K_5$ for each
face.

\begin{lemma} \label{lem:fanplanar}
The fan-planar graphs ar not closed under path-addition.
\end{lemma}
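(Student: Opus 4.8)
The plan is to use as witness the crossed dodecahedral graph $G=D^{\ast}$, where $D$ is the planar, $3$-regular, $3$-connected dodecahedral graph with $20$ vertices, $30$ edges and $12$ pentagonal faces, and $D^{\ast}$ adds, inside every pentagonal face, the five chords turning its five incident vertices into a $K_{5}$. Then $G$ has $20$ vertices and $30+12\cdot 5 = 90 = 5\cdot 20-10$ edges, so by the density bound $5n-10$ for $n$-vertex fan-planar graphs $G$ is a \emph{maximal} fan-planar graph; it is realised by drawing $D$ planarly and placing each face's $K_{5}$ in convex position, where the five chords form a pentagram all of whose crossings lie on the fan at a single vertex. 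Since a subgraph of a fan-planar graph is fan-planar, it suffices to produce two vertices $u,v$ of $G$ such that only boundedly many internally vertex-disjoint $u$--$v$ paths can ever coexist in a fan-planar drawing; then already one sufficiently long path, and hence iterated path-addition, is blocked.

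The first and main step is a rigidity claim: in \emph{every} fan-planar drawing of $G$ the skeleton $D$ is drawn without crossings among its own edges, and therefore --- $D$ being $3$-connected --- with its unique combinatorial embedding, each face's $K_{5}$ lying inside that face. This is where the remark about Proposition~1 of \cite{ku-dfang-14} enters: in a maximal fan-planar graph the uncrossed edges form a spanning planar subgraph onto which the crossed edges are attached face by face, and for $D^{\ast}$ this planar subgraph can only be $D$ itself; I would have to verify that no competing fan-planar drawing can make a skeleton edge crossed without either losing an edge or creating a crossing of independent edges or a wrong-side crossing. Granting the claim, let $C$ be the equatorial $10$-cycle of $D$, the unique $10$-cycle whose deletion leaves exactly the two pentagonal faces $T$ and $B$ of $D$. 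In every planarization of every fan-planar drawing of $G$ this $C$ separates the plane into two parts, one carrying the $K_{5}$ on $T$ and the other the $K_{5}$ on $B$, and these two $K_{5}$'s are the only components of $G-C$ that contain vertices of $G$; thus $G$ satisfies SCP.

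To finish I would take $u$ in the $T$-clique, $v$ in the $B$-clique, and add internally vertex-disjoint $u$--$v$ paths; each such path must cross $C$. I would then combine two facts: a single edge crosses at most two edges of any cycle, since the edges it crosses form a fan and a cycle meets each vertex in at most two edges; and the wrong-side prohibition, which forces a path-edge that crosses a skeleton edge $c$ of $C$ and must continue into the saturated pentagram $K_{5}$ on the far side to leave the tiny planarization face it enters only by crossing a chord incident to an endpoint of $c$, of which there are boundedly many. Book-keeping these local moves around the ten edges of $C$ bounds by a constant the total number of times $C$ can be crossed in any fan-planar drawing of $G$ enlarged by added paths, hence the number of internally vertex-disjoint $u$--$v$ paths that can coexist; but iterated path-addition produces arbitrarily many of them (each new path brings fresh internal vertices), a contradiction, so the fan-planar graphs are not closed under path-addition.

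The hard part is the rigidity claim of the second paragraph. It is exactly the obstacle flagged before the lemma --- there is no bound on crossings per edge, so one cannot simply invoke the earlier SCP lemma with a fixed $k$ --- and it can be resolved only by exploiting maximality together with the $3$-connectivity of $D$ to pin the drawing down, after which one still has to argue, by the local analysis above, that the separating decagon admits only boundedly many path-crossings.
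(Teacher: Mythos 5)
Your witness and overall architecture coincide with the paper's: the crossed dodecahedral graph, a rigidity claim pinning down its fan-planar drawing, and then a blocking argument for added paths. The problem is that the rigidity claim --- which you yourself identify as ``the hard part'' and explicitly defer (``I would have to verify that no competing fan-planar drawing \ldots'') --- is the entire content of the lemma and is left unproven. Maximality with respect to the $5n-10$ bound and $3$-connectivity of the skeleton $D$ do not by themselves force the skeleton edges to be uncrossed; one has to argue locally. The paper does this by supposing some pentagram is redrawn with a chord $(v_i,v_{i+2})$ on the wrong side of its pentagon and then exhibiting, in the two $K_5$'s attached to adjacent faces of the dodecahedron, four chords incident to the shared vertex but with pairwise distinct far endpoints that would all have to cross one skeleton edge, contradicting the fan condition. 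Some such concrete argument must appear; without it you have not ruled out exotic drawings in which your equatorial decagon fails to separate anything (and, as you note, the paper's generic SCP lemma is unavailable here because fan-planarity imposes no bound $k$ on crossings per edge).

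Your endgame also has a gap that the paper avoids. You bound the crossings of the decagon $C$ by ``book-keeping'', but the natural fan-planarity constraint does not immediately give a constant: all the added $u$--$v$ paths share the endpoints $u$ and $v$, so their first (respectively last) edges form a legitimate fan at $u$ (respectively $v$) and may all cross the \emph{same} edge of $C$; your observation that one edge crosses at most two edges of $C$ bounds the wrong quantity (crossings per path edge, not path edges per cycle edge). The paper's route is cleaner and needs no counting: each pentagram chord $(v_a,v_{a+2})$ is already crossed by the two chords $(v_{a+1},v_{a-1})$ and $(v_{a+1},v_{a+3})$, whose only common endpoint is $v_{a+1}$, so any further edge crossing it must be incident to $v_{a+1}$; an interior edge of an added path has two new endpoints and hence can cross no chord at all, so the pentagrams are impenetrable and already a \emph{single} sufficiently long path between suitably separated vertices cannot be added. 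I recommend replacing your counting step by this impenetrability argument and, above all, supplying a proof of the rigidity claim.
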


\begin{proof}
We claim that the dodecahedral graph with a crossed pentagram in
each face has a unique fan-planar embedding up to graph automorphism
\cite{s-rm1pg-10}. Towards a contradiction, suppose there is another
embedding. Then some $K_5$ subgraphs must be embedded with at least
one pair of crossing edges and at least one edge $(v_i, v_{i+2})$ in
the outer face, where $(v_1, \ldots, v_5)$ is the cyclic ordering of
the vertices. However, there are two  $K_5$   with vertices $v_i,
v_{i+1},  w, z_1, z_2$ and $v_{i+1}, v_{i+2}, w, z_3, z_4$ such the
edges $(v_{i+1}, z_i)$ for $i=1,2,3,4$ do not admit a fan-planar
embedding, since they must cross $(v_i, v_{i+1})$. In addition, a
crossed pentagram is impenetrable for any vertex-disjoint path,
since all edges are crossed by a fan. Hence, one cannot add an
internally vertex-disjoint path from a vertex of the inner face to a
vertex of the outer face of the underlying planar pentagram.
\qed
\end{proof}

\begin{lemma} \label{lem:non-outer-fan}
The outer-fan-planar and the outer-fan-crossing free graphs are not
closed under path-addition.
\end{lemma}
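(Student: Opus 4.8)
The plan is to imitate the impenetrability argument used for the (non-outer) fan-planar graphs in Lemma~\ref{lem:fanplanar}, adapted to the outer setting in which every vertex must lie on the outer face, i.e.\ on a circle. The first step is to fix a small, explicit gadget graph $G$ together with two designated vertices $u$ and $v$, and to verify that $G$ admits an outer-fan-planar drawing (and, for the second class, an outer-fan-crossing-free drawing). A natural candidate is a graph obtained by gluing a chain of copies of $K_4$ -- or, for a tighter crossing budget, the outer ``crossed pentagram'' $K_5$ -- along \emph{both} arcs of the circle between $u$ and $v$, so that $u$ and $v$ end up in ``antipodal'' positions. Each glued copy contributes a pair of chords that cross inside the disk, so in the intended drawing every chord of the gadget is already crossed by another chord of the gadget; this is the ``crossed pentagram in each face'' idea of Lemma~\ref{lem:fanplanar}, carried over to an outerplanar base (here, the boundary cycle plays the role of the planar skeleton).

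The bulk of the work is the second step: to show that every outer-fan-planar (respectively outer-fan-crossing-free) drawing of $G$ has essentially the same cyclic vertex order on the circle, up to reflection and graph automorphism -- exactly the style of the uniqueness claim for the crossed dodecahedron in Lemma~\ref{lem:fanplanar}. The rigidifying chords of each $K_4$ or $K_5$ copy pin down the local order of that copy, and ruling out the remaining cyclic orders is a finite but delicate case analysis, since neither class has unique embeddings. Once the order is fixed, $u$ and $v$ are separated along each of the two arcs by a ``wall'' of chords, and by construction every wall chord is crossed by some chord of the gadget. Crucially, because we placed a chain of saturated copies along \emph{each} arc, there is no empty arc of the circle in which an added path could be hidden, so every internally vertex-disjoint $u$--$v$ path really must pierce one of the two walls.

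The third step is the impenetrability mechanism. For outer-fan-planar graphs: a newly added path runs between freshly created vertices, so each of its edges is independent of every edge of $G$; if such a path edge crossed a wall chord $e$, then $e$ would be crossed by two independent edges (a gadget chord and the path edge), contradicting fan-planarity. Hence no internally vertex-disjoint $u$--$v$ path can cross either wall, and since $u$ and $v$ lie in different faces of the planarized walls, \emph{no} further $u$--$v$ path can be added at all, regardless of the supplementary edge set $F$ -- and as these classes are hereditary we may take $F=\emptyset$. So already a single path of length at least $|G|-1$ cannot be added, and the class is not closed. For outer-fan-crossing-free graphs the argument must be different, because there is no a~priori bound on how often an edge may be crossed: instead I would argue that a path edge crossing a wall layer is forced, by the funnel geometry of a $K_4$ (or $K_5$) copy, to cross two chords that share the vertex through which the layer narrows -- two \emph{adjacent} edges -- which is forbidden; combining the two walls, only a bounded number of internally disjoint $u$--$v$ paths fit, and adding one more path than this bound shows $G\oplus P\oplus F\notin\mathcal G$ for every $F$.

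The main obstacle is the second step together with the fan-crossing-free refinement of the third. Establishing that \emph{every} outer drawing of the gadget realizes the same cyclic order (so that $u$ and $v$ are genuinely antipodal with both arcs blocked) is the heart of the matter, and it is where all the case analysis lives; and for the fan-crossing-free class one cannot simply invoke the earlier ``SCP with at most $k$ crossings per edge'' lemma, since that hypothesis fails, so the funnel/adjacency argument has to be made geometrically explicit and the arc-nesting escape route has to be closed off by the very design of the gadget.
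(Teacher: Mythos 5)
Your plan diverges substantially from the paper's argument, and as written it has genuine gaps rather than just unpolished details. The paper does not build an impenetrable gadget at all: it starts from $K_5$ (outer-fan-planar), respectively $K_4$ (outer-fan-crossing-free), adds a path between every relevant pair of vertices (using that closure under a single path-addition implies closure under iterated path-additions), observes that the resulting graph is biconnected, and then invokes Lemma~1 of \cite{bcghk-rfpg-14} --- every biconnected outer-fan-planar graph admits a straight-line drawing with all vertices on a circle --- to derive a contradiction. That structural lemma is precisely the tool that lets one bypass the embedding-uniqueness case analysis you identify as ``the heart of the matter.''

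The concrete gaps in your proposal are these. First, the gadget is never fixed: ``a natural candidate is\dots'' is not a construction, and for the outer classes it is not even obvious that a graph with two saturated walls between antipodal $u$ and $v$ admits an outer-fan-planar (resp.\ outer-fan-crossing-free) drawing to begin with. Second, the uniqueness of the cyclic vertex order, which you correctly flag as essential, is only asserted to be ``a finite but delicate case analysis'' and is never carried out; without it an adversarial drawing could place $u$ and $v$ adjacently on the circle and your walls evaporate. Third, even granting the intended cyclic order, the internal vertices of the added path must themselves be placed on the circle, interleaved with gadget vertices, so a path edge need only hop over a few gadget vertices at a time; you would have to verify that every such short hop crosses a wall chord that is already crossed by an \emph{independent} gadget chord, and this does not follow from ``no empty arc.'' Fourth, the funnel/adjacency mechanism for the outer-fan-crossing-free case is only gestured at. Either carry out these four steps in full, or switch to the paper's route via the circular straight-line drawing lemma for biconnected outer-fan-planar graphs, which short-circuits the first three.
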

\begin{proof}
Consider $K_5$, which is outer-fan-planar, and then add paths from
one vertex  $v_1$ to all other vertices. The obtained graph $G$ is
biconnected  and, if outer-fan-planar, would admit a straight-line
outer-fan-planar drawing with all vertices on a circle, as proved in
\cite{bcghk-rfpg-14}, Lemma 1. However, this is impossible.

Similarly, $K_4$ is outer-fan-crossing-free and the addition of a
path for each pair of vertices violates this property.
\qed
\end{proof}

At last, we consider fan-planar graphs. Binucci et al.
\cite{bddmpst-fan-15} have proved that there are 2-planar graphs
that are not fan-planar. The graph is illustrated in Fig.\
\ref{fig:nonfanplanar}, where each bold line is replaced by a
fan-planar embedding of $K_7$ and each thin or line  represents an
edge.

\begin{lemma} \label{lem:non-outer-fan}
The fan-planar  graphs are not closed under path-addition.
\end{lemma}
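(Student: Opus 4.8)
The plan is to run the argument of Lemma~\ref{lem:fanplanar} again, now with the $2$-planar, non-fan-planar graph $\widehat{G}$ of Binucci et al.\ \cite{bddmpst-fan-15} from Fig.~\ref{fig:nonfanplanar} playing the role of the crossed dodecahedral graph. Since $\widehat{G}$ is itself not fan-planar, I would not start from it; instead I would fix a thin edge $e=(u,v)$ of $\widehat{G}$ for which $G := \widehat{G}-e$ is fan-planar, and apply path-addition to $G$. Such an $e$ exists because the forced fan-crossing responsible for the non-fan-planarity of $\widehat{G}$ sits on a thin edge joining two $K_7$-gadgets, and a fan-planar drawing of $G$ is read off from the (essentially forced) drawing of $\widehat{G}$ by deleting the arc of $e$. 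Now add an internally vertex-disjoint path $P=(u,w_1,\dots,w_t,v)$ of length at least $|G|-1$ between $u$ and $v$. As the class is hereditary the supplementary set is $F=\emptyset$, so $G\oplus P$ is just $G$ with the edge $uv$ replaced by a path through $t$ new degree-two vertices; in particular the construction is insensitive to the exact value of $t$.

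Next I would show that $G\oplus P$ is not fan-planar. Assume it had a fan-planar drawing $\mathcal{D}$. Deleting the internal vertices $w_1,\dots,w_t$ leaves a fan-planar drawing of $G$, and here the $K_7$-gadgets do the work: each is drawn with every edge crossed by a fan, so every face incident to a gadget boundary is impenetrable for a vertex-disjoint path, exactly as the crossed pentagram is in Lemma~\ref{lem:fanplanar}. Hence, in every fan-planar drawing of $G$, the vertex $u$ is enclosed by a cycle of gadget-boundary edges that separates it from $v$, and each such edge is already crossed by a fan whose apex lies inside a gadget and is therefore different from $u$ and $v$ (one picks $u$ and $v$ among the skeleton vertices so that they are never apices of these fans). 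The curve of $P$ from $u$ to $v$ in $\mathcal{D}$ must cross one such boundary edge $f$; the segment of $P$ crossing $f$ is incident only to new degree-two vertices, or at most to $u$ or $v$, hence cannot belong to the fan already crossing $f$, so it violates fan-planarity. This is precisely the obstruction that makes $\widehat{G}$ non-fan-planar, now re-created by $P$, and since it does not depend on $t$ it excludes every admissible length of $P$. Therefore $G\oplus P$ has no fan-planar drawing, and the fan-planar graphs are not closed under path-addition.

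The step I expect to be the main obstacle is the rigidity claim underlying the second paragraph: one must show that, even after the thin edge $e$ has been removed, every fan-planar drawing of $G$ still forces each $K_7$-gadget into a drawing whose boundary edges are fan-saturated and keep $u$ separated from $v$, so that no corridor opens up through which a subdivided $uv$-edge could be routed. This is the fan-planar counterpart of the separating-cycle technique, but, since fan-planar edges may carry unboundedly many crossings, it cannot be obtained by applying the SCP definition or Lemma~\ref{lem:fanplanar} as a black box; instead it would rest on Proposition~1 of Kaufmann and Ueckerdt \cite{ku-dfang-14} together with the fact that $K_7$ is complete on its vertex set, so that no gadget edge can be re-drawn uncrossed without destroying fan-planarity. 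The remaining routine points are to check that $\widehat{G}-e$ is fan-planar for a suitable choice of $e$, and that the enclosed endpoint $u$ can be taken so as not to be the apex of any fan crossing the separating cycle.
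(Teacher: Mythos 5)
Your proposal diverges from the paper's argument at the decisive step, and the step where it diverges contains a genuine gap. The paper removes \emph{both} chords $(v_2,v_6)$ and $(v_3,v_9)$ from the Binucci et al.\ graph, re-inserts them as vertex-disjoint paths $P_1,P_2$, and then reruns Binucci et al.'s own obstruction: since every gadget edge is already crossed by a fan internal to its $K_7$, neither a chord nor a path replacing it can penetrate a gadget, so $P_1$ and $P_2$ must each be routed entirely on one side of the skeleton cycle $v_1,\dots,v_{10}$, and the interleaving of the four chords then forces $(v_1,v_7)$ (or $(v_4,v_8)$) to be crossed by two independent edges. Your proof instead removes a single thin edge $e=(u,v)$ and claims that in \emph{every} fan-planar drawing of $\widehat{G}-e$ the vertex $u$ is enclosed by a cycle of fan-saturated gadget-boundary edges separating it from $v$, so that the added path cannot be routed at all, as in the crossed-dodecahedron argument of Lemma~\ref{lem:fanplanar}. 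That claim is not the mechanism at work here and is almost certainly false: $u$ and $v$ are both skeleton vertices lying \emph{on} the $10$-cycle of gadgets, not on opposite sides of it, and a curve from $u$ to $v$ can avoid every gadget by staying inside (or outside) the cycle --- what it cannot avoid are the \emph{other chords}, and the contradiction comes from two independent path edges crossing one of them, not from impenetrability of the gadgets. Indeed, if your separation claim were true, the single chord $(u,v)$ would already be undrawable, which is a much stronger (and unsupported) statement than Binucci et al.\ prove. You flag this rigidity claim yourself as the main obstacle, so the gap is acknowledged but not closed.

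Two further points. First, it is not clear that deleting only one thin edge restores fan-planarity; the paper deletes two, and with three of the four chords still present the Binucci et al.\ side-and-interleaving argument may already yield a contradiction, so your starting graph $G=\widehat{G}-e$ may fail to be fan-planar. Second, the part of your argument that \emph{is} sound --- gadget edges are fan-saturated by apices interior to the gadgets, hence no edge of the new path may cross them --- is exactly the ingredient the paper uses, but it should be deployed to confine the path to one side of the skeleton cycle, not to separate $u$ from $v$. If you replace your second paragraph by that confinement step followed by the interleaving contradiction on the chords $(v_1,v_7)$ and $(v_4,v_8)$ (and delete both critical chords at the outset), you recover the paper's proof.
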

\begin{proof}
Consider  graph $G$ from   Binucci et al. \cite{bddmpst-fan-15} and
remove the edges $(v_2, v_6)$ and $(v_3, v_9)$. The so-obtained
graph is fan-planar, where each bold line represents a fan-planar
embedding of $K_7$. Replace the edges $(v_2, v_6)$ and $(v_3, v_9)$
by vertex-disjoint paths $P_1$ and $P_2$. Binucci et al. argue that
the edges $(v_2, v_6)$ and $(v_3, v_9)$ must be routed on one side
of the cycle $v_1, \ldots, v_{10}$, and this also holds for $P_1$
and $P_2$, which, inevitably, introduces a crossing of $(v_1, v_7)$
by two independent edges, and similarly for $(v_4,v_8)$.
\qed
\end{proof}

\begin{figure}
    \centering
    \includegraphics[scale=0.4]{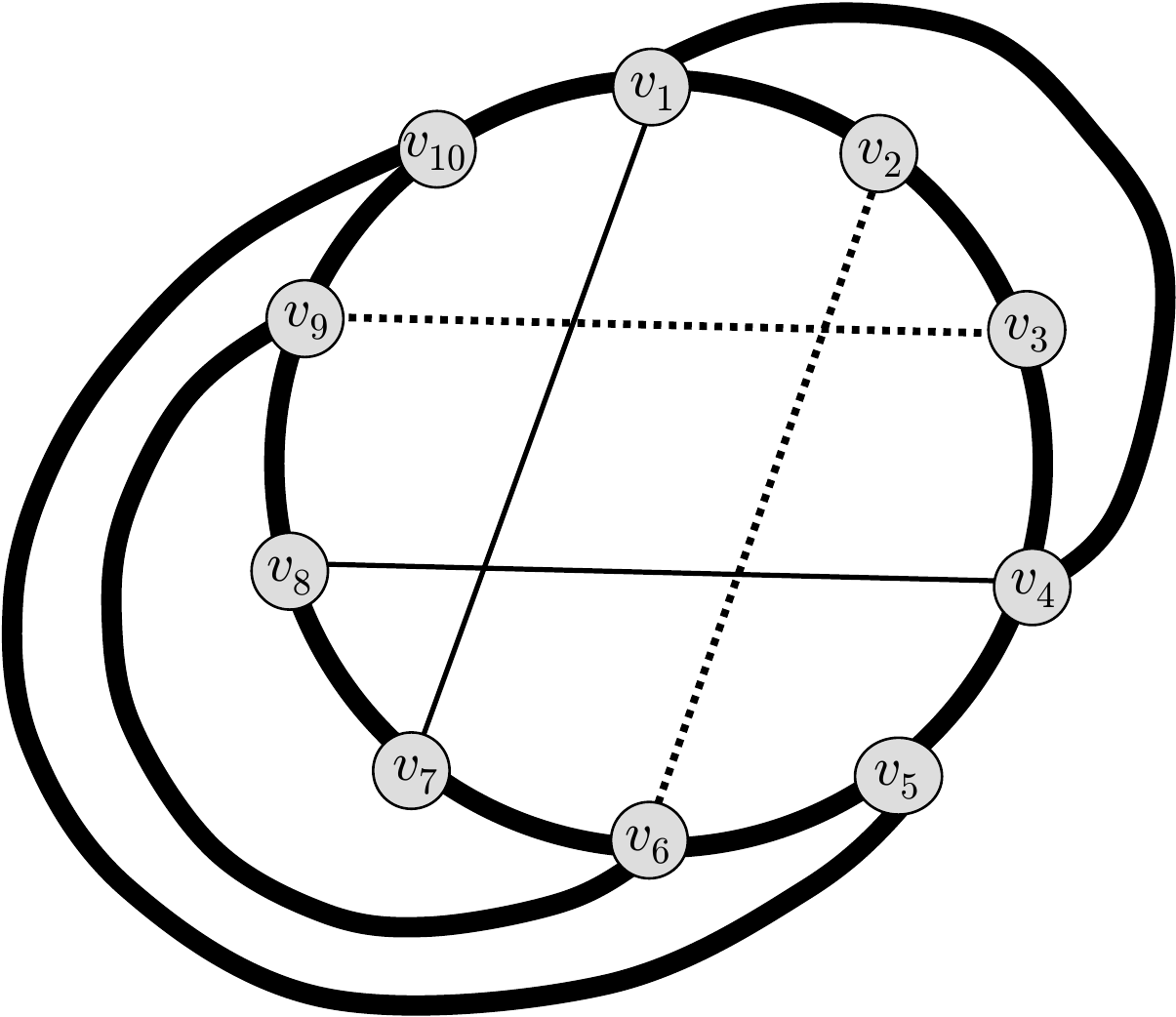}
    \caption{A non-fan-planar graph with bold edges representing a fan-planar embedding of $K_7$ from \cite{bddmpst-fan-15}}
    \label{fig:nonfanplanar}
\end{figure}

\section{Conclusion}

In this work we have investigated the path-addition operation which is
opposite of taking minors and can be emulated by edge addition and
subdivision. We have shown that some classes of graphs are closed under
path-addition and others are not. It might be worthwhile to investigate the
closure of classes of graphs under path-addition. For example, a graph that
is obtained from a planar graph by path-addition is a RAC graph, and a graph
that is obtained from a 1-planar graph is a fan-crossing free and a bar
1-visibility graph. Which graphs are RAC graphs and cannot be obtained by
path-addition from planar graphs, even with a relaxed version of
path-addition without a length restriction?

%% --------------------------------------------------------------------
%       Bibliography
%% --------------------------------------------------------------------

\bibliographystyle{abbrv}
\bibliography{brandybibV3}

\begin{thebibliography}{10}

\bibitem{aapps-qpg-97}
P.~K. Agarwal, B.~Aronov, J.~Pach, R.~Pollack, and M.~Sharir.
\newblock Quasi-planar graphs have a linear number of edges.
\newblock {\em Combinatorica}, 17(1):1--9, 1997.

\bibitem{abbghnr-o1p-15}
C.~Auer, C.~Bachmaier, F.~J. Brandenburg, A.~Glei\ss{}ner, K.~Hanauer,
  D.~Neuwirth, and J.~Reislhuber.
\newblock Outer 1-planar graphs.
\newblock {\em Algorithmica}, 74(4):1293--1320, 2016.

\bibitem{bcghk-rfpg-14}
M.~A. Bekos, S.~Cornelsen, L.~Grilli, S.~Hong, and M.~Kaufmann.
\newblock On the recognition of fan-planar and maximal outer-fan-planar graphs.
\newblock In C.~A. Duncan and A.~Symvonis, editors, {\em {GD} 2014}, volume
  8871 of {\em {LNCS}}, pages 198--209. Springer, 2014.

\bibitem{bddmpst-fan-15}
C.~Binucci, E.~{Di Giacomo}, W.~Didimo, F.~Montecchiani, M.~Patrignani,
  A.~Symvonis, and I.~G. Tollis.
\newblock Fan-planarity: Properties and complexity.
\newblock {\em Theor. Comput. Sci.}, 589:76--86, 2015.

\bibitem{b-pagbt-98}
H.~L. Bodlaender.
\newblock A partial \emph{k}-arboretum of graphs with bounded treewidth.
\newblock {\em Theor. Comput. Sci.}, 209(1-2):1--45, 1998.

\bibitem{befhlmrrsw-98}
P.~Bose, H.~Everett, S.~P. Fekete, M.~E. Houle, A.~Lubiw, H.~Meijer,
  K.~Romanik, G.~Rote, T.~C. Shermer, S.~Whitesides, and C.~Zelle.
\newblock A visibility representation for graphs in three dimensions.
\newblock {\em J. Graph Algorithms Appl.}, 2(2), 1998.

\bibitem{b-vr1pg-14}
F.~J. Brandenburg.
\newblock 1-visibility representation of 1-planar graphs.
\newblock {\em J. Graph Algorithms Appl.}, 18(3):421--438, 2014.

\bibitem{b-4m1pg-15}
F.~J. Brandenburg.
\newblock On 4-map graphs and 1-planar graphs and their recognition problem.
\newblock Technical Report {arXiv:1509.03447 [cs.CG]}, Computing Research
  Repository ({CoRR}), February 2015.
\newblock \url{http://arxiv.org/abs/1509.03447}.

\bibitem{ben-ab1v-16}
F.~J. Brandenburg, A.~Esch, and D.~Neuwirth.
\newblock On aligned bar 1-visibility graphs.
\newblock In M.~Kaykobad and R.~Petreschi, editors, {\em {WALCOM} 2016}, pages
  95--106, 2016.

\bibitem{bhkn-bvg-15}
F.~J. Brandenburg, N.~Heinsohn, M.~Kaufmann, and D.~Neuwirth.
\newblock On bar (1, j)-visibility graphs - (extended abstract).
\newblock In M.~S. Rahman and E.~Tomita, editors, {\em WALCOM 2015}, volume
  8973 of {\em LNCS}, pages 246--257. Springer, 2015.

\bibitem{bls-gc-99}
A.~Brandst{\"a}dt, V.~B. Le, and J.~P. Spinrad.
\newblock {\em Graph Classes: A Survey}.
\newblock SIAM Monographs on Discrete Mathmeatics and Applications. SIAM, 1999.

\bibitem{cgp-mg-02}
Z.~Chen, M.~Grigni, and C.~H. Papadimitriou.
\newblock Map graphs.
\newblock {\em J. {ACM}}, 49(2):127--138, 2002.

\bibitem{cgp-rh4mg-06}
Z.~Chen, M.~Grigni, and C.~H. Papadimitriou.
\newblock Recognizing hole-free 4-map graphs in cubic time.
\newblock {\em Algorithmica}, 45(2):227--262, 2006.

\bibitem{cpkk-fan-15}
O.~Cheong, S.~Har{-}Peled, H.~Kim, and H.~Kim.
\newblock On the number of edges of fan-crossing free graphs.
\newblock {\em Algorithmica}, 73(4):673--695, 2015.

\bibitem{fpp-hdpgg-90}
H.~de~Fraysseix, J.~Pach, and R.~Pollack.
\newblock How to draw a planar graph on a grid.
\newblock {\em Combinatorica}, 10:41--51, 1990.

\bibitem{DEGLST-bkvg-07}
A.~M. Dean, W.~Evans, E.~Gethner, J.~D. Laison, M.~A. Safari, and W.~T.
  Trotter.
\newblock Bar k-visibility graphs.
\newblock {\em J. Graph Algorithms Appl.}, 11(1):45--59, 2007.

\bibitem{dett-gdavg-99}
G.~Di~Battista, P.~Eades, R.~Tamassia, and I.~G. Tollis.
\newblock {\em Graph Drawing: Algorithms for the Visualization of Graphs}.
\newblock Prentice Hall, 1999.

\bibitem{del-dgrac-11}
W.~Didimo, P.~Eades, and G.~Liotta.
\newblock Drawing graphs with right angle crossings.
\newblock {\em Theor. Comput. Sci.}, 412(39):5156--5166, 2011.

\bibitem{d-gt-00}
R.~Diestel.
\newblock {\em Graph Theory}.
\newblock Springer, 2000.

\bibitem{fm-pbkvg-08}
S.~Felsner and M.~Massow.
\newblock Parameters of bar k-visibility graphs.
\newblock {\em J. Graph Algorithms Appl.}, 12(1):5--27, 2008.

\bibitem{cdhmm-vrg-95}
F.J.Cobos, J.~Dana, F.~Hurtado, A.~Márquez, and F.~Mateos.
\newblock On a visibility representation of graphs.
\newblock In F.~J. Brandenburg, editor, {\em Graph Drawing}, volume 1027 of
  {\em LNCS}, pages 152--161. Springer, 1995.

\bibitem{heklss-ltao1p-15}
S.~Hong, P.~Eades, N.~Katoh, G.~Liotta, P.~Schweitzer, and Y.~Suzuki.
\newblock A linear-time algorithm for testing outer-1-planarity.
\newblock {\em Algorithmica}, 72(4):1033--1054, 2015.

\bibitem{hsv-rstg-99}
J.~P. Hutchinson, T.~Shermer, and A.~Vince.
\newblock On representations of some thickness-two graphs.
\newblock {\em Computational Geometry}, 13:161--171, 1999.

\bibitem{ku-dfang-14}
M.~Kaufmann and T.~Ueckerdt.
\newblock The density of fan-planar graphs.
\newblock Technical Report arXiv:1403.6184 [cs.DM], Computing Research
  Repository ({CoRR}), March 2014.
\newblock \url{http://arxiv.org/abs/1403.6184}.

\bibitem{K-cgdp-30}
K.~Kuratowski.
\newblock Sur le probl$\acute{e}$me des courbes gauches en topologie.
\newblock {\em Fund. Math.}, 15:271--283, 1930.

\bibitem{l-beyond-14}
G.~Liotta.
\newblock Graph drawing beyond planarity: some results and open problems.
\newblock In S.~Bistarelli and A.~Formisano, editors, {\em Proc. 15th Italian
  Conference on Theoretical Computer Science}, volume 1231 of {\em {CEUR}
  Workshop Proceedings}, pages 3--8. CEUR-WS.org, 2014.

\bibitem{lm-Lvis-16}
G.~Liotta and F.~Montecchiani.
\newblock L-visibility drawings of {IC}-planar graphs.
\newblock {\em Inf. Process. Lett.}, 116(3):217--222, 2016.

\bibitem{mm-ephtpta-96}
K.~Mehlhorn and P.~Mutzel.
\newblock On the embedding phase of the {H}opcroft and {T}arjan planarity
  testing algorithm.
\newblock {\em Algorithmica}, 16(2):233--242, 1996.

\bibitem{mn-LEDA-99}
K.~Mehlhorn and S.~N{\"{a}}her.
\newblock {\em {LEDA:} {A} Platform for Combinatorial and Geometric Computing}.
\newblock Cambridge University Press, 1999.

\bibitem{pt-gdfce-97}
J.~Pach and G.~T{\'o}th.
\newblock Graphs drawn with a few crossings per edge.
\newblock {\em Combinatorica}, 17:427--439, 1997.

\bibitem{ringel-65}
G.~Ringel.
\newblock Ein {S}echsfarbenproblem auf der {K}ugel.
\newblock {\em Abh. aus dem Math. Seminar der Univ. Hamburg}, 29:107--117,
  1965.

\bibitem{rs-minorsXX-04}
N.~Robertson and P.~D. Seymour.
\newblock Graph minors {XX}. {W}agner's conjecture.
\newblock {\em J. Comb. Theory, Ser. {B}}, 92(2):325--357, 2004.

\bibitem{s-o1p-84}
H.~Schumacher.
\newblock {\"U}ber 1-optimale {G}raphen.
\newblock {\em Mathematische Nachrichten}, 117:323--339, 1984.

\bibitem{s-egr-03}
J.~P. Spinrad.
\newblock {\em Efficient Graph Representations}.
\newblock Fields Institute Monographs 19. American Mathematical Society,
  Providence, 2003.

\bibitem{s-rm1pg-10}
Y.~Suzuki.
\newblock Re-embeddings of maximum 1-planar graphs.
\newblock {\em {SIAM} J. Discr. Math.}, 24(4):1527--1540, 2010.

\bibitem{tt-vrpg-86}
R.~Tamassia and I.~G. Tollis.
\newblock A unified approach a visibility representation of planar graphs.
\newblock {\em Discrete Comput. Geom.}, 1:321--341, 1986.

\bibitem{t-mgpt-98}
M.~Thorup.
\newblock Map graphs in polynomial time.
\newblock In {\em Proc. 39th {FOCS}}, pages 396--405. {IEEE} Computer Society,
  1998.

\bibitem{w-minor-37}
K.~Wagner.
\newblock {\"U}ber eine {E}igenschaft der ebenen {K}omplexe.
\newblock {\em Math. {A}nn.}, 114:570--–590, 1937.

\bibitem{w-dfsks-84}
S.~G. Williamson.
\newblock Depth-first search and {K}uratowski subgraphs.
\newblock {\em J. of the ACM}, 31(4):681--693, 1984.

\bibitem{w-cbg-85}
S.~Wismath.
\newblock Characterizing bar line-of-sight graphs.
\newblock In {\em Proc. 1st ACM Symp. Comput. Geom.}, pages 147--152. ACM
  Press, 1985.

\end{thebibliography}

\end{document}